\newtheorem{theo}{Theorem}
\newenvironment{ftheo}
  {\begin{mdframed}\begin{theo}}
  {\end{theo}\end{mdframed}}
\newcommand{\ket}[1]{\left|#1\right\rangle}
\newcommand{\bra}[1]{\left\langle#1\right|}
\definecolor{Teal}{rgb}{0.0, 0.48, 0.46}
\definecolor{Crimson}{RGB}{220, 20, 60}
\newtheorem{corollary}{Corollary}
\newtheorem{theorem}{Theorem}
\newtheorem{proposition}{Proposition}
\newtheorem{defi}{Definition}
\newcommand{\tr}{\text{tr}}
\newcommand{\dd}{\mathfrak{d}}
\newenvironment{proof}{\paragraph{Proof:}}{\hfill$\square$}
\begin{document}
\title{No-go theorems for photon state transformations in quantum linear optics}

\date{\today}
\author{Pablo V. Parellada}
\email{pablo.veganzones@uva.es}  
\affiliation{Universidad de Valladolid, Dpto. Teor\'ia de la Se\~{n}al e Ing. Telem\'atica, Paseo Bel\'en n$^o$ 15, 47011 Valladolid, Spain}

\author{Vicent Gimeno i Garcia}
\email{gimenov@uji.es}
\affiliation{Department of Mathematics, Universitat Jaume I-IMAC,   E-12071, 
Castell\'{o}, Spain}

\author{Julio Jos\'e Moyano-Fern\'andez}
\email{moyano@uji.es}  
\affiliation{Department of Mathematics, Universitat Jaume I-IMAC,   E-12071, 
Castell\'{o}, Spain}

\author{Juan Carlos Garcia-Escartin}
\email{juagar@tel.uva.es}  
\affiliation{Universidad de Valladolid, Departamento de Teor\'ia de la Se\~{n}al e Ing. Telem\'atica, Paseo Bel\'en n$^o$ 15, 47011 Valladolid, Spain}

\begin{abstract}
We give a necessary condition for photon state transformations in linear optical setups preserving the total number of photons. From an analysis of the algebra describing the quantum evolution, we find a conserved quantity that appears in all allowed optical transformations. We give some examples and numerical applications, with example code, and give three general no-go results. These include (i) the impossibility of deterministic transformations which redistribute the photons from one to two different modes, (ii) a proof that it is impossible to generate a perfect Bell state with an arbitrary ancilla from the Fock basis and (iii) a restriction for the conversion between different types of entanglement (converting GHZ to W states). These tools and results can help in the design of experiments for optical quantum state generation.

\end{abstract}

\maketitle




\section{Overview and summary of the results}\label{descriptiom}
Linear optical systems acting on $n$ photons in $m$ different modes offer a rich evolution and their study is not only interesting in its own right, but has applications to linear optics quantum computing \cite{KLM01,KMN07} and in boson sampling, a proposed demonstration that quantum systems can perform tasks impossible for any classical device \cite{AA11,Aar11}. 

These simple optical setups are sometimes called linear optics multiports or linear interferometers and appear in many experiments in both classical and quantum optics. 

In this work we study the transformations of photon number states of the form $\ket{n_1\ldots n_m}$ in a system with $m$ modes where mode $i$ has $n_i$ photons. We give a necessary condition for an allowed transformation
\begin{equation}
\ket{\psi_{\text{in}}}=\sum_{i=1}^{M}\alpha_i \ket{i} \to
\ket{\psi_{\text{out}}}=\sum_{i=1}^{M}\beta_i \ket{i}
\end{equation}
for general superpositions of all the possible states of the form  $\ket{i}=\ket{n_1^{i}\ldots n_m^i}$ with $\sum_{j=1}^{M}n_j^{i}=n$. The condition can be also used for mixed states.

The criterion shows the limitations of linear optical interferometers and can be used to derive a few no-go results. For instance, we show that it is impossible to generate a Bell state deterministically from a separable photon number input in heralded schemes with any number of ancillary modes and photons.

These results come from an analysis of the density matrix of the input and output states in terms of the algebra that gives the effective Hamiltonians of the evolution in linear optical systems. The adjoint representation of the Lie groups and algebras that describe the problem have quantities that are conserved under linear optical evolution. We give an explicit formulation of the resulting necessary condition for a valid transformation, which can be used to identify impossible transitions from a given input to a target optical state. The examples in this paper, as well as general functions to compute the conserved quantities, are available for Python as an extension to the QOptCraft open software \cite{QOptCraft}.

First we introduce the theory of quantum linear optical interferometers (Section  \ref{intro}) and their description in terms of Lie algebras (Section \ref{imagealgebra}). Then, in Section \ref{adjoint}, we find an invariant associated to valid optical transformations with linear optics which results from an analysis of the adjoint map. Section \ref{forbidden} gives the main result of the paper: a necessary condition for allowed transformations between photon number states in linear interferometers. Section \ref{formulae} presents some closed formulae for the criterion. We conclude by giving three consequences of the necessary condition. In Section \ref{ntonminusk} we show it restricts the possible transformation between states in the Fock basis. Section \ref{bell} gives an impossibility result for the generation of dual-rail Bell states from separable photon number states using linear optics, even for an arbitrary number of ancillary photons and modes. Section \ref{GHZtoW} has a similar proof for the impossibility of converting between different types of entanglement for GHZ and W states. We close the paper with an overview of the results in Section \ref{conclusions}.

\section{Quantum evolution in linear optical systems}\label{intro}
Classically, a linear interferometer acting on $m$ modes (or ports) is completely described by an $m\times m$ unitary scattering matrix $S$ \cite{Hau95,Poz04}. 

In quantum optics, a basic problem is determining the effect of a linear interferometer on number states $\ket{n_1,\ldots , n_m}$ where the $i$th mode has $n_i$ photons and $\sum_i^m n_i=n$. Each photon number state and each mode is orthogonal to the rest. The relevant Hilbert space for all the possible states, $\mathcal{H}_{m,n}$, is isomorphic to $\mathbb{C}^M$ with $M=\dim_{\mathbb{C}} \mathcal{H}_{m,n}= \binom{m+n-1}{n}$.  

For a fixed number of photons and modes, $M$ corresponds to all the possible ways of distributing the $n$ indistinguishable photons into the $m$ modes. The dimension of the Hilbert space corresponds to the ways of placing $n$ balls into $m$ boxes and it is sometimes called the number of weak compositions of $n$ into $m$ (all the combinations of $k$ nonnegative numbers that sum to $n$) or multiset coefficients.
  
The quantum evolution in $\mathcal{H}_{m,n}$ is described by an $M\times M$ unitary matrix $U$ which is derived from $S$ using a photonic homomorphism $\varphi_{n,m}(S)$ that goes from the unitary group $U(m)$ into $U(M)$. There are different equivalent definitions of this homomorphism and its properties \cite{Cai53,SGL04,Sch04,AA11}.

For a linear $m$-port, $S$ has only $2m^2-1$ degrees of freedom, while, in order to generate any possible $M\times M$ matrix $U$ we need to be able to set $2M^2-1$ degrees of freedom. Except for trivial cases when $n=1$ or $m=1$, linear optics cannot produce all the quantum operations which are possible in $\mathcal{H}_{m,n}$ \cite{MG17}. When $m=1$, the multiport is reduced to a phase shifter $e^{i\phi}$ and all the possible operations are trivial. When $n=1$, $U=\varphi(S)=S$ and then any desired quantum evolution is possible (but this is not very useful as the number of modes would equal the dimension of the Hilbert space). Once we know $S$, there exist multiple methods that describe explicit implementations using only beam splitters and phase shifters \cite{RZB94,BA14,Saw16,CHM16,GMS18}.

For any given $U$ and fixed $n$ and $m$, it is possible to determine if there is any linear system which can provide the desired evolution and, if that is the case, find a suitable scattering matrix $S$ \cite{GGM19}.

In this paper, we study a different problem: whether an input state $\ket{\psi_{\text{in}}}=\sum_{i=1}^{M}\alpha_i \ket{i}$ for all the $\ket{i}$ possible states in the form $\ket{i}=\ket{n_1^{i}\ldots n_m^i}$ with $\sum_{j=1}^{M}n_j^{i}=n$ can be taken to an output state $\ket{\psi_{\text{out}}}=\sum_{i=1}^{M}\beta_i \ket{i}$.

\section{Preliminaries: The image algebra}\label{imagealgebra}
In certain applications, the linear multiport is best described in terms of the Hermitian matrices giving the effective Hamiltonians that provide an alternative description of the evolutions $S$ and $U$ \cite{LN04,GGM18}. We can define two effective Hamiltonians $H_S$ and $H_U$ so that we can recover the unitary matrices through exponentiation with $S=e^{iH_s}$ and $U=e^{iH_U}$. The skew-Hermitian matrices $iH_S \in \mathfrak{u}(m)$ and $iH_U \in \mathfrak{u}(M)$ live in the algebras $\mathfrak{u}(m)$ and $\mathfrak{u}(M)$ associated to the unitary groups of the scattering matrix $S$ and the full quantum evolution $U$ in the larger Hilbert space with $n$ photons. 

We are limited to the image of the photonic homomorphism $\varphi$. Any unitary that can be realized with lossless linear interferometers preserving the total photon number $U \in \mathrm{im}{\varphi}$ can be described as the exponential of a skew-Hermitian matrix $i H_U\in \mathrm{im}{d\varphi}$ from the image subalgebra of $\mathfrak{u}(m)$ which is smaller than $\mathfrak{u}(M)$ in systems with more than one photon or one mode.

\subsection{Optical realizations}\label{opticalrealizations}

Any linear interferometer with $m$ modes is completely described by an $m\times m$ unitary scattering matrix $S$ which has a limited number of degrees of freedom, $2m^2-1$. This means that, except for trivial cases when $n=1$ or $m=1$, linear multiports can only provide a limited subset of all the possible operations over $n$ photons in $m$ modes, which are described by $M\times M$ unitary matrices $U$ with $2M^2-1$ degrees of freedom \cite{MG17}. 

The scattering matrices $S$ are elements of the unitary group $U(m)$ and any general evolution $U$ on $\mathcal{H}_{m,n}$ is an element of the unitary group $U(M)$. The subgroup of all the operations which can be implemented with linear optics is described by the image subgroup of $\varphi_{m,M}$, namely $\mathrm{im} \varphi_{m,M} = \{B \in U(M) : \exists \, A \in U(m) ~\mbox{such that}~\varphi_{m,M}(A)=B\}$.

We say 
\begin{defi}
A matrix $U \in U(M)$ is an $(m,n)$-optical realization if $U \in \mathrm{im} \varphi_{m,M}$.
\end{defi}

The map $\varphi_{m,M}$ is a differentiable group homomorphism \cite{AA11} inducing an algebra homomorphism, $d \varphi_{m,M}$. The commutative diagram 
\begin{center}
	\begin{tikzcd}
		\mathfrak{u}(m) \arrow{r}{d \varphi_{m,M}}\arrow{d}[swap]{\exp} & \mathfrak{u}(M)\arrow{d}{\exp} \\
		U(m)\arrow{r}[swap]{ \varphi_{m,M}} & U(M) \\
	\end{tikzcd}
\end{center}
relates the unitary groups $U(m)$ and $U(M)$ containing the scattering matrix $S$ and the $n$-photon evolution operator $U$, respectively, to the algebras $\mathfrak{u}(m)$ and $\mathfrak{u}(M)$ whose elements correspond to skew-Hermitian matrices $iH_S$ and $iH_U$ which give an equivalent description of the evolution through exponentiation of the Hamiltonians $H_S$ and $H_U$ \cite{LN04,GGM18}.

The image algebra can be described from the differential $d \varphi_{m,M}(iH_S) =\sum_{jk}iH_{Sjk}\hat{a}_j^{\dag} \hat{a}_k$ for $iH_S\in \mathfrak{u}(m)$ (see \cite{GGM18} for the details), where $\hat{a}_i^{\dag}$ and $\hat{a}_i$ are the photon creation and annihilation operators for mode $i$ respectively. 

\subsection{The image algebra}
\label{Obasis}
Consider the canonical basis of $\mathfrak{u}(m)$ $\{\ket{1}=\ket{1,0,\ldots , 0},\ket{2}=\ket{0,1,\ldots, 0}, \ldots , \ket{m}=\ket{0,\ldots, 0, 1}  \}$. The matrices
\begin{align}
\label{basisdef}
e_{jk}:=& \frac{i}{2}\big (\ket{j}\bra{k}+\ket{k}\bra{j}\big ) \mbox{ for } \, k\leq j=1,\ldots , m; \\
f_{jk}:=& \frac{1}{2} \big (\ket{j}\bra{k}-\ket{k}\bra{j}\big )  \mbox{ for } \, k< j=1,\ldots , m, \nonumber
\end{align} 
for a basis of $\mathfrak{u}(m)$ and the matrices
\begin{align*}
d \varphi_{m,M}(e_{jk})=& \frac{i}{2}\big (\hat{a}^\dag_j \hat{a}_k+\hat{a}^\dag_k \hat{a}_j\big ) \neq 0,\\
d \varphi_{m,M}(f_{jk})=& \frac{1}{2} \big (\hat{a}^\dag_j \hat{a}_k-\hat{a}^\dag_k \hat{a}_j\big ) \neq 0,
\end{align*}
give a basis for the image algebra $\mathfrak{d}=\mathrm{im} \hspace{0.5ex}d\varphi_{m,M} \subset \mathfrak{u}(M)$ (the map $d \varphi_{m,M}$ is injective). 
We can label the $m^2$ elements of the basis of the image algebra as: 

\begin{align}
\label{ImBasis}
b_{j}^{n}&=i\hat{n}_j \mbox{ for } \, j=1,\ldots , m;\nonumber \\
b_{jk}^{e}&=\frac{i}{2}\big (\hat{a}^\dag_j \hat{a}_k+\hat{a}^\dag_k \hat{a}_j\big ) \mbox{ for } \, k< j=1,\ldots , m; \nonumber  \\
b_{jk}^{f}&=\frac{1}{2}\big (\hat{a}^\dag_j \hat{a}_k-\hat{a}^\dag_k \hat{a}_j\big ) \mbox{ for } \, k< j=1,\ldots , m. 
\end{align}

We consider projection into each basis element using the real inner product, a positive definite symmetric bilinear form,
\begin{equation}
\langle u,v \rangle:=\frac{1}{2}\mathrm{tr}(u^{\dag} v + v^{\dag}u).
\end{equation}

For our skew-Hermitian matrices $\langle u,v \rangle=-\mathrm{tr}(uv)$ and the product is symmetric as $\langle u,v \rangle = -\mathrm{tr}(uv) = -\mathrm{tr}(vu) = \langle v,u \rangle$ and is equivalent to the Hilbert-Schmidt inner product. This inner product induces a positive definite norm:
\begin{equation}
\langle u,u \rangle = \mathrm{tr}(u^{\dag}u)= ||u||^2.
\end{equation}
This gives a metric that is Riemannian and bi-invariant.  
\subsubsection{Orthonormalization. A simple example.}\label{orthexample}
The basis of $\mathrm{im}\, d\varphi_{m,M}$ given in Equation (\ref{ImBasis}) is not orthonormal, even if it was derived from the orthonormal canonical basis of $\mathfrak{u}(m)$. We can see an example for a linear interferometer with two photons and two modes ($m=n=2$). Take the state basis $\{\ket{20},\ket{11},\ket{02}\}$ with
\begin{equation}
\ket{20}=\left( \begin{array}{c}
1 \\
0\\
0
 \end{array} \right), \hspace{2ex}
\ket{11}=\left( \begin{array}{c}
0 \\
1\\
0
 \end{array} \right), \hspace{2ex}
\ket{02}=\left( \begin{array}{c}
0 \\
0\\
1
 \end{array} \right).
\end{equation}
For this state order, the basis of the image algebra can be written as the matrices
\begin{align}
b_1=&b_{1,2}^e=\frac{i}{2}(\hat{a}_1^\dag\hat{a}_2+\hat{a}_2^\dag\hat{a}_1)=\frac{i}{2}\left( \begin{array}{ccc}
0 & \sqrt{2} & 0 \\
\sqrt{2} & 0 & \sqrt{2}\\
0 & \sqrt{2} & 0
 \end{array} \right), \hspace{2ex}
b_2=b_{1}^n=i\hat{n}_1=i\left( \begin{array}{ccc}
2 & 0 & 0 \\
0 & 1 & 0\\
0 & 0 & 0
 \end{array} \right),  \nonumber \\
b_3=&b_{2}^n=i\hat{n}_2= i \left( \begin{array}{ccc}
0 & 0 & 0 \\
0 & 1 & 0\\
0 & 0 & 2
 \end{array} \right),\hspace{2ex}
b_4=b_{1,2}^f=\frac{1}{2}(\hat{a}_1^\dag\hat{a}_2-\hat{a}_2^\dag\hat{a}_1)=\frac{1}{2}\left( \begin{array}{ccc}
0 & -\sqrt{2} & 0 \\
\sqrt{2} & 0 & -\sqrt{2}\\
0 & \sqrt{2} & 0
 \end{array} \right).
\end{align}
The elements are determined by the chosen state order. The first element of $b_1$ (first column and first row) is computed from $\bra{20}b_1\ket{20}= \bra{20}\frac{i}{2} \sqrt{2}\ket{11}=0$. $\bra{11}b_1\ket{20}= \bra{11} \frac{i}{2} \sqrt{2} \ket{11}=\frac{i\sqrt{2}}{2}$ gives the second column of the first row. Repeating this procedure we can get all the given matrices.

This is a complete basis for the image algebra, but unlike the basis of $\mathfrak{u}(m)$ defined in (\ref{basisdef}), it is not orthonormal. We can quickly see $\langle b_2,b_2 \rangle=5\neq 1$ and $ \langle b_2,b_3 \rangle=1 \neq 0$.

We can orthonormalize the basis using the Gram-Schmidt process to get a series of orthonormal matrices $c_i$ with $i=1,\ldots, m^2$. In the calculations of the following sections, an orthonormal basis helps when computing the projections of elements of the larger algebra into the image algebra without counting twice the contributions of parts of the original skew-Hermitian matrix. 

\section{The adjoint map: invariants in linear optical transformations.}\label{adjoint}
Let $\mathrm{Ad}_U:\mathfrak{u}(M)\to \mathfrak{u}(M)$ be the adjoint map defined by $\mathrm{Ad}_U(v)=UvU^{\dag}$. We denote by $\mathfrak{d}$ the subalgebra $d \varphi_{m,M}(\mathfrak{u}(m))\subseteq \mathfrak{u}(M)$ . Notice that $d \varphi_{m,M}:\mathfrak{u}(m)\to \dd$ is a bijection.
 
A crucial result, proved in \cite{GGM19}, is that only the unitaries which can be obtained with linear optics define an adjoint map that takes the elements of $\dd$ into $\dd$. 

\begin{theorem}\label{teo:1}
	$U \in \mathrm{im} \varphi_{m,M} \Longleftrightarrow \mathrm{Ad}_U\mid_{\mathfrak{d}}$ is an automorphism.
\end{theorem}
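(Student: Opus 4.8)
The plan is to prove both implications of Theorem~\ref{teo:1} separately, using as the central bridge the compatibility of $\varphi_{m,M}$ with $d\varphi_{m,M}$ expressed by the commutative diagram, together with the standard naturality identity $\mathrm{Ad}_{e^{X}}=e^{\mathrm{ad}_X}$ relating the adjoint action on the group to that on the algebra.

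For the forward implication ($\Rightarrow$), suppose $U\in\mathrm{im}\,\varphi_{m,M}$, so $U=\varphi_{m,M}(S)$ for some $S\in U(m)$. I would first check that conjugation by $U$ maps $\mathfrak d$ into itself. The cleanest route is to use that $\varphi_{m,M}$ is a group homomorphism together with the algebra-homomorphism intertwining property $\varphi_{m,M}(S)\,d\varphi_{m,M}(X)\,\varphi_{m,M}(S)^{-1}=d\varphi_{m,M}(\mathrm{Ad}_S X)$, which follows by differentiating $\varphi_{m,M}(S e^{tX} S^{-1})=\varphi_{m,M}(S)\varphi_{m,M}(e^{tX})\varphi_{m,M}(S)^{-1}$ at $t=0$ and invoking the commutative diagram $\varphi_{m,M}\circ\exp=\exp\circ\,d\varphi_{m,M}$. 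Since $\mathrm{Ad}_S$ is a linear automorphism of $\mathfrak u(m)$ and $d\varphi_{m,M}$ is a linear bijection onto $\mathfrak d$, the composite $\mathrm{Ad}_U|_{\mathfrak d}=d\varphi_{m,M}\circ\mathrm{Ad}_S\circ (d\varphi_{m,M})^{-1}$ is a bijective linear map $\mathfrak d\to\mathfrak d$; and because $\mathrm{Ad}_U$ always respects the Lie bracket on $\mathfrak u(M)$, its restriction to the subalgebra $\mathfrak d$ is a Lie-algebra automorphism of $\mathfrak d$.

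For the converse ($\Leftarrow$), assume $\mathrm{Ad}_U|_{\mathfrak d}$ is an automorphism of $\mathfrak d$; I want to produce $S\in U(m)$ with $\varphi_{m,M}(S)=U$. Here I would appeal to the result of \cite{GGM19} cited just above the theorem: given an arbitrary $U\in U(M)$, that reference provides a decision procedure and, when the answer is affirmative, an explicit $S$. The task therefore reduces to showing that the automorphism hypothesis is exactly the criterion used there — i.e., that $\mathrm{Ad}_U$ preserving $\mathfrak d$ forces $U$ into the image. One natural way: since $\mathrm{Ad}_U|_{\mathfrak d}$ is an automorphism of the Lie algebra $\mathfrak d\cong\mathfrak u(m)$, and all automorphisms of $\mathfrak u(m)$ (for the relevant $m$) are, up to the center and complex conjugation, inner — implemented by some $S\in U(m)$ — one gets $\mathrm{Ad}_U|_{\mathfrak d}=\mathrm{Ad}_{\varphi_{m,M}(S)}|_{\mathfrak d}$. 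Then $\mathrm{Ad}_{U\varphi_{m,M}(S)^{-1}}$ fixes $\mathfrak d$ pointwise, which (since $\mathfrak d$ contains enough operators, e.g. all the number operators $i\hat n_j$, to separate the Fock basis up to phases) pins down $U\varphi_{m,M}(S)^{-1}$ to a diagonal unitary that is itself in $\mathrm{im}\,\varphi_{m,M}$, whence $U\in\mathrm{im}\,\varphi_{m,M}$.

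The main obstacle is the converse direction, and specifically the classification step: controlling which automorphisms of $\mathfrak d$ actually arise as $\mathrm{Ad}_{\varphi_{m,M}(S)}$ and ruling out "exotic" automorphisms (the outer one coming from complex conjugation, which corresponds to $S\mapsto \bar S$ and is \emph{not} realized by a linear-optical unitary on $n\ge 2$ photons in general). Handling the residual diagonal/centralizer ambiguity — showing any $U$ that centralizes $\mathfrak d$ must already be optical — is the other delicate point; this is presumably where the detailed argument of \cite{GGM19} does the real work, and I would lean on it rather than reprove it here.
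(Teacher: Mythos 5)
The first thing to note is that the paper does not prove Theorem~\ref{teo:1} at all: it is imported from \cite{GGM19} (``A crucial result, proved in \cite{GGM19}''), so there is no in-paper argument to measure yours against. Your forward implication is complete and correct, and is the standard one: differentiating $\varphi_{m,M}(Se^{tX}S^{-1})=\varphi_{m,M}(S)\varphi_{m,M}(e^{tX})\varphi_{m,M}(S)^{-1}$ and using the commutative diagram gives the intertwining relation $U\,d\varphi_{m,M}(X)\,U^{\dag}=d\varphi_{m,M}(SXS^{\dag})$, and since $d\varphi_{m,M}$ is a linear bijection onto $\mathfrak{d}$ and $\mathrm{Ad}_S$ is invertible, $\mathrm{Ad}_U|_{\mathfrak{d}}$ is an automorphism of $\mathfrak{d}$.

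The converse is where the substance lies, and your sketch correctly isolates the two danger points but closes neither. Both can be closed with the representation theory of $\mathrm{Sym}^n(\mathbb{C}^m)\cong\mathcal{H}_{m,n}$. (i) Centralizer: your observation that the number operators separate the Fock basis only shows that $W=\varphi_{m,M}(S)^{-1}U$ is \emph{diagonal}, and a generic diagonal unitary on $\mathcal{H}_{m,n}$ is \emph{not} in $\mathrm{im}\,\varphi_{m,M}$ (only phases of the form $e^{i\sum_j n_j\phi_j}$ are), so this step as written does not finish. The correct closing move is that $W$ commutes with all of $\mathfrak{d}$, hence with the irreducible representation $\mathrm{Sym}^n$ of $U(m)$, so Schur's lemma forces $W=e^{i\theta}I_M=\varphi_{m,M}(e^{i\theta/n}I_m)$. (ii) Outer automorphism: if $\mathrm{Ad}_U|_{\mathfrak{d}}$ realized $\mathrm{Ad}_S\circ\tau$ with $\tau(X)=\overline{X}$ on the $\mathfrak{su}(m)$ factor, then $\mathrm{Sym}^n(\mathbb{C}^m)$ would be unitarily equivalent to its conjugate representation; for $m\geq 3$ and $n\geq 1$ these have distinct highest weights, a contradiction, while for $m=2$ the automorphism is inner anyway, so the exotic case never occurs. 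As submitted, your converse is a deferral to \cite{GGM19} plus a sketch with acknowledged gaps --- which is no worse than what the paper itself does --- but if the goal is a self-contained proof, these two steps must be supplied, and the ``diagonal'' argument in (i) must be replaced by the Schur argument.
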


This gives a way to determine whether a concrete $U\in U(M)$ is an element of $ \mathrm{im} \varphi_{m,M}$ or not \cite{GGM19}.

\subsubsection{Conserved quantities}
The inner product in $\mathfrak{u}(M)$ allows the decomposition of $\mathfrak{u}(M)$ in the orthogonal subspaces
$$
\mathfrak{u}(M)= \mathfrak{d}\oplus \mathfrak{d}_\perp.
$$
The subspace $\mathfrak{d}_\perp$ contains all the matrices in $\mathfrak{u}(M)$ orthogonal to the elements of the image subalgebra $\dd$ with respect to our inner product.
Thence, for any skew-Hermitian matrix $v\in \mathfrak{u}(M)$ we can define a unique decomposition 
\begin{equation}
v=v_{t}+v_{p},
\end{equation}
where $v_t \in \mathfrak{d}$ is called the tangent component of $v$ and $v_p\in \mathfrak{d}_\perp$ the perpendicular component. 

We can choose two orthonormal bases $\{c_i^{t}\}$ and $\{c_j^{p}\}$ for $\dd$ and $\dd_\perp$. Section \ref{Obasis} gives a method to create a basis $\{c_i^{t}\}$ with indices $i=1,\ldots,m^2$. We can then complete a basis for $\mathfrak{u}(M)$ and take the remaining elements $\{c_j^{p}\}$ with $j=1,\ldots,M^2-m^2$ for the basis of the orthogonal complement of $\dd$. 

With these two bases we have a complete description of $v$ in terms of projections to the tangent and perpendicular subspaces with
\begin{equation}
v=\sum_{i=1}^{m^2} \alpha_i^t c_i^t+\sum_{j=1}^{M^2-m^2} \alpha_j^t c_j^p
\end{equation}
for $\alpha_i^t=\langle c_i^t,v \rangle$ and  $\alpha_j^p=\langle c_j^p,v \rangle$. 

For the adjoint map under any $U\in  \mathrm{im} \varphi_{m,M} $ the projection into the tangent and perpendicular subspaces are preserved as can be shown in the following proposition:
\begin{proposition}\label{prop:1}The following statements are equivalent:
\begin{enumerate}
    \item[(1)] $U \in \mathrm{im} \varphi_{m,M}$.
    \item[(2)] For any $v\in\mathfrak{u}(M)$, $\mathrm{Ad}_U(v_{t})=Uv_{t}U^{\dag}\in \dd$.
    \item[(3)] For any $v\in\mathfrak{u}(M)$, $\mathrm{Ad}_U(v_{p})=Uv_{p}U^{\dag}\in \dd_\perp$.
\end{enumerate}
\end{proposition}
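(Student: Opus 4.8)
The plan is to reduce the whole equivalence to Theorem~\ref{teo:1} by exploiting two elementary features of the adjoint map. First, $\mathrm{Ad}_U$ is a linear isometry of $(\mathfrak{u}(M),\langle\cdot,\cdot\rangle)$: using cyclicity of the trace,
\[
\langle \mathrm{Ad}_U(u),\mathrm{Ad}_U(v)\rangle = -\mathrm{tr}\!\big(U u U^{\dag} U v U^{\dag}\big) = -\mathrm{tr}(uv) = \langle u,v\rangle ,
\]
so it carries orthogonal decompositions to orthogonal decompositions; in particular $\mathrm{Ad}_U(\mathfrak{d})=\mathfrak{d}$ holds if and only if $\mathrm{Ad}_U(\mathfrak{d}_\perp)=\mathfrak{d}_\perp$. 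Second, $\mathrm{Ad}_U$ is invertible on all of $\mathfrak{u}(M)$, with inverse $\mathrm{Ad}_{U^{\dag}}$, hence injective on every subspace.

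Next I would prove $(1)\Leftrightarrow(2)$. For $(1)\Rightarrow(2)$: by Theorem~\ref{teo:1}, $\mathrm{Ad}_U|_{\mathfrak{d}}$ is an automorphism of $\mathfrak{d}$, and since $v_t\in\mathfrak{d}$ for every $v$, we get $\mathrm{Ad}_U(v_t)\in\mathfrak{d}$. For $(2)\Rightarrow(1)$: as $v$ ranges over $\mathfrak{u}(M)$, the components $v_t$ exhaust $\mathfrak{d}$, so $(2)$ says $\mathrm{Ad}_U(\mathfrak{d})\subseteq\mathfrak{d}$; injectivity of $\mathrm{Ad}_U$ on the finite-dimensional space $\mathfrak{d}$ upgrades this inclusion to $\mathrm{Ad}_U(\mathfrak{d})=\mathfrak{d}$, and since $\mathrm{Ad}_U$ also preserves the Lie bracket, $\mathrm{Ad}_U|_{\mathfrak{d}}$ is an automorphism of $\mathfrak{d}$, whence $U\in\mathrm{im}\,\varphi_{m,M}$ by Theorem~\ref{teo:1}.

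Then $(1)\Leftrightarrow(3)$ follows the same pattern. Assuming $(1)$, Theorem~\ref{teo:1} gives $\mathrm{Ad}_U(\mathfrak{d})=\mathfrak{d}$, and the isometry remark forces $\mathrm{Ad}_U(\mathfrak{d}_\perp)=\mathfrak{d}_\perp$, so $\mathrm{Ad}_U(v_p)\in\mathfrak{d}_\perp$ for every $v$, which is $(3)$. Conversely, $(3)$ gives $\mathrm{Ad}_U(\mathfrak{d}_\perp)\subseteq\mathfrak{d}_\perp$; injectivity plus finite-dimensionality make this an equality, and the isometry remark then yields $\mathrm{Ad}_U(\mathfrak{d})=\mathfrak{d}$, i.e.\ $(2)$, hence $(1)$.

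The computations involved (the trace identity and the rank argument) are routine. The point that needs care — really the only one — is to never claim more from the hypotheses than the inclusion $\mathrm{Ad}_U(\mathfrak{d})\subseteq\mathfrak{d}$ (or its perpendicular analogue), and to upgrade that inclusion to an equality using global invertibility of $\mathrm{Ad}_U$ \emph{before} invoking Theorem~\ref{teo:1}, since that theorem demands a genuine automorphism of $\mathfrak{d}$ rather than a mere endomorphism.
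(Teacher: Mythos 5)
Your proof is correct and follows essentially the same route as the paper: both reduce the equivalence $(1)\Leftrightarrow(2)$ to Theorem~\ref{teo:1} via the observation that the tangent components $v_t$ exhaust $\mathfrak{d}$, and both use the invariance of the inner product under $\mathrm{Ad}_U$ to link the tangent and perpendicular statements. Your only addition is the (welcome) explicit step upgrading the inclusion $\mathrm{Ad}_U(\mathfrak{d})\subseteq\mathfrak{d}$ to an equality via injectivity and finite-dimensionality before invoking Theorem~\ref{teo:1}, a point the paper's proof passes over silently.
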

\begin{proof}
    The proof of the equivalence (1) $\Longleftrightarrow$ (2) is done by Theorem \ref{teo:1}. Observe that, if we assume $U \in \mathrm{im} \varphi_{m,M}$, then $ \mathrm{Ad}_U(v_{t})\in \dd$ since $v_t\in\dd$ by Theorem \ref{teo:1}. On the other hand, if $\mathrm{Ad}_U(v_{t})\in \dd$ for any $v\in\mathfrak{u}(M)$, then for any $w\in \dd$ we have that $\mathrm{Ad}_U(w)\in \dd$, since $w_t=w$, and by Theorem \ref{teo:1} we conclude $U \in \mathrm{im} \varphi_{m,M}$.

    To prove the equivalence  (2) $\Longleftrightarrow$ (3) we only have to observe the following relation:
    $$
    \begin{aligned}
    \langle \mathrm{Ad}_U(v_t),\, \mathrm{Ad}_U(v_p)\rangle =&\frac{1}{2}\mathrm{tr}\left(\left(Uv_tU^\dag\right)^\dag Uv_pU^\dag+\left(Uv_pU^\dag\right)^\dag Uv_tU^\dag\right)=\frac{1}{2}\mathrm{tr}\left(U\left(v_t^\dag v_p+v_p^\dag v_t\right)U^\dag\right)\\
    =&\frac{1}{2}\mathrm{tr}\left(v_t^\dag v_p+v_p^\dag v_t\right)=\langle v_t\, v_p\rangle=0.
    \end{aligned}
    $$
    Then, $\mathrm{Ad}_U(v_t)\in\dd$ if and only if $\mathrm{Ad}_U(v_p)\in\dd_\perp$.
\end{proof}
\begin{corollary}\label{coro:1}
    Let  $U \in \mathrm{im} \varphi_{m,M}$.  Let $v$ be an element of $\mathfrak{u}(M)$. Denote by $w=UvU^\dag$. Then: 
    \begin{enumerate}
    \item $\Vert v_t\Vert=\Vert w_t\Vert$,
    \item $\Vert v_p\Vert=\Vert w_p\Vert$.
    \end{enumerate}
\end{corollary}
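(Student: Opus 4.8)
The plan is to combine Proposition~\ref{prop:1} with the elementary fact that $\mathrm{Ad}_U$ is an isometry of $\mathfrak{u}(M)$ for the inner product $\langle\cdot,\cdot\rangle$. First I would record that conjugation by a unitary preserves the Hilbert--Schmidt inner product: for any $a,b\in\mathfrak{u}(M)$,
\[
\langle UaU^\dag,\,UbU^\dag\rangle=\tfrac12\mathrm{tr}\!\left(Ua^\dag U^\dag UbU^\dag+Ub^\dag U^\dag UaU^\dag\right)=\tfrac12\mathrm{tr}\!\left(a^\dag b+b^\dag a\right)=\langle a,b\rangle,
\]
using the cyclicity of the trace and $U^\dag U=I$. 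In particular $\Vert UaU^\dag\Vert=\Vert a\Vert$ for every $a$.

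Next I would pin down the tangent and perpendicular components of $w=UvU^\dag$. Write $v=v_t+v_p$ with $v_t\in\mathfrak{d}$ and $v_p\in\mathfrak{d}_\perp$, so that $w=Uv_tU^\dag+Uv_pU^\dag$. Since $U\in\mathrm{im}\,\varphi_{m,M}$, Proposition~\ref{prop:1} gives $Uv_tU^\dag\in\mathfrak{d}$ and $Uv_pU^\dag\in\mathfrak{d}_\perp$. Because $\mathfrak{u}(M)=\mathfrak{d}\oplus\mathfrak{d}_\perp$ is an orthogonal direct sum, the splitting of $w$ into a $\mathfrak{d}$-part plus a $\mathfrak{d}_\perp$-part is unique, hence $w_t=Uv_tU^\dag$ and $w_p=Uv_pU^\dag$.

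Finally, applying the isometry property to these two identities yields $\Vert w_t\Vert=\Vert Uv_tU^\dag\Vert=\Vert v_t\Vert$ and $\Vert w_p\Vert=\Vert Uv_pU^\dag\Vert=\Vert v_p\Vert$, which is exactly the claim. I do not expect any real obstacle: the only points needing care are that $\mathrm{Ad}_U$ genuinely preserves the norm (immediate from cyclicity of the trace) and that the orthogonal splitting determines the two components uniquely (immediate from $\mathfrak{d}\cap\mathfrak{d}_\perp=\{0\}$). The substance is already carried by Proposition~\ref{prop:1}; the corollary is essentially the remark that $\mathrm{Ad}_U$ restricts to an orthogonal map on each summand.
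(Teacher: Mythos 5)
Your proposal is correct and follows essentially the same route as the paper: identify $w_t=Uv_tU^\dag$ and $w_p=Uv_pU^\dag$ via Proposition~\ref{prop:1} together with the uniqueness of the orthogonal decomposition $\mathfrak{u}(M)=\mathfrak{d}\oplus\mathfrak{d}_\perp$, then use cyclicity of the trace to see that conjugation by $U$ preserves the norm. The only cosmetic difference is that you state the isometry of $\mathrm{Ad}_U$ for the full inner product up front, whereas the paper verifies the norm preservation directly on each component.
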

\begin{proof} The definition of $w$ and the decomposition of $\mathfrak{u}(M)=\dd\oplus\dd_\perp$ yield
$$
w=w_t+w_p=UvU^\dag =U(v_t+v_p)U^\dag=Uv_tU^\dag+Uv_pU^\dag.
$$
Since the decomposition of $w$ is unique and Proposition \ref{prop:1} implies that $Uv_tU^\dag\in \mathfrak{d}$ and $Uv_pU^\dag \in \mathfrak{d}_\perp$, we conclude that
$$
w_t=Uv_tU^\dag,\quad w_p=Uv_pU^\dag.
$$
Therefore
$$
\Vert w_t \Vert^2=\Vert Uv_tU^\dag\Vert =\frac{1}{2}\mathrm{tr}\left( \left(Uv_tU^\dag\right)^\dag Uv_tU^\dag\right)=\mathrm{tr}\left(Uv_t^\dag v_t U^\dag\right)= \mathrm{tr}\left(v_t^\dag v_t \right)=\langle v_t,\, v_t\rangle=\Vert v_t\Vert^2,
$$
and 
$$
\Vert w_p \Vert^2=\Vert Uv_pU^\dag\Vert =\frac{1}{2}\mathrm{tr}\left( \left(Uv_pU^\dag\right)^\dag Uv_pU^\dag\right)=\mathrm{tr}\left(Uv_p^\dag v_p U^\dag\right)= \mathrm{tr}\left(v_p^\dag v_p \right)=\langle v_p,\, v_p\rangle=\Vert v_p\Vert^2.
$$
\end{proof}

Corollary \ref{coro:1} leads us to the definition of a tangent invariant
\begin{equation}
\label{It}
I_t(v):=\Vert v_t\Vert^2=\sum_{i=1}^{m^2} |\langle v, c_i^t \rangle|^2,
\end{equation}
and a perpendicular invariant
\begin{equation}
\label{Ip}
I_p(v):=\Vert v_p\Vert^2=\sum_{j=1}^{M^2-m^2} |\langle v, c_j^p \rangle|^2,
\end{equation}
which are conserved under the adjoint map for a $U\in  \mathrm{im} \varphi_{m,M} $ in the following sense:
\begin{corollary}
    If $U\in  \mathrm{im} \varphi_{m,M}$ and $v\in\mathfrak{u}(M)$, then
    \begin{enumerate}
        \item The tangent invariant is conserved under adjoint transformations as
        $$
I_t(\mathrm{Ad}_U(v))=I_t(v).
        $$
        \item Similarly, the perpendicular invariant is conserved under adjoint transformations as
        $$
I_p(\mathrm{Ad}_U(v))=I_p(v).
        $$
    \end{enumerate}
\end{corollary}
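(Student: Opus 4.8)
The plan is to obtain this immediately from Corollary~\ref{coro:1}, since each invariant is by definition the squared norm of one component in the decomposition $\mathfrak{u}(M)=\mathfrak{d}\oplus\mathfrak{d}_\perp$. First I would set $w=\mathrm{Ad}_U(v)=UvU^\dag$ and note that $w$ is again skew-Hermitian, because $(UvU^\dag)^\dag=Uv^\dag U^\dag=-UvU^\dag$; hence $w$ admits the unique orthogonal decomposition $w=w_t+w_p$ with $w_t\in\mathfrak{d}$, $w_p\in\mathfrak{d}_\perp$. Since we are assuming $U\in\mathrm{im}\,\varphi_{m,M}$, Corollary~\ref{coro:1} applies verbatim and gives $\|w_t\|=\|v_t\|$ and $\|w_p\|=\|v_p\|$.

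Second, I would reconcile the two expressions for each invariant used in the statement, namely $\|v_t\|^2$ versus the sums in (\ref{It}) and (\ref{Ip}). Because $\{c_i^t\}_{i=1}^{m^2}$ is an orthonormal basis of $\mathfrak{d}$ and $v_t$ is the orthogonal projection of $v$ onto $\mathfrak{d}$, one has $\langle v,c_i^t\rangle=\langle v_t,c_i^t\rangle$ (the perpendicular part $v_p$ is orthogonal to every $c_i^t$), so Parseval's identity yields $\|v_t\|^2=\sum_{i=1}^{m^2}|\langle v_t,c_i^t\rangle|^2=\sum_{i=1}^{m^2}|\langle v,c_i^t\rangle|^2$, which is exactly (\ref{It}); the identical argument with the orthonormal basis $\{c_j^p\}$ of $\mathfrak{d}_\perp$ gives (\ref{Ip}). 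Combining the two steps, $I_t(\mathrm{Ad}_U(v))=\|w_t\|^2=\|v_t\|^2=I_t(v)$ and $I_p(\mathrm{Ad}_U(v))=\|w_p\|^2=\|v_p\|^2=I_p(v)$.

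There is essentially no obstacle: the real work has already been carried out in Proposition~\ref{prop:1} (that $\mathrm{Ad}_U$ maps $\mathfrak{d}$ to $\mathfrak{d}$ and $\mathfrak{d}_\perp$ to $\mathfrak{d}_\perp$, hence commutes with taking tangent/perpendicular parts) and in Corollary~\ref{coro:1} (that $\mathrm{Ad}_U$ is a Hilbert--Schmidt isometry). Indeed, the whole argument collapses to the one-line chain $I_t(\mathrm{Ad}_U(v))=\|(\mathrm{Ad}_U v)_t\|^2=\|\mathrm{Ad}_U(v_t)\|^2=\|v_t\|^2=I_t(v)$, and likewise for $I_p$. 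The only points requiring a word of care are that $\mathrm{Ad}_U(v)$ stays in $\mathfrak{u}(M)$ so the decomposition is available, and that $(\mathrm{Ad}_U v)_t=\mathrm{Ad}_U(v_t)$, which is precisely the uniqueness-of-decomposition observation already made in the proof of Corollary~\ref{coro:1}.
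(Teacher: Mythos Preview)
Your proposal is correct and matches the paper's approach: the paper states this corollary without an explicit proof, treating it as an immediate consequence of Corollary~\ref{coro:1} together with the definitions $I_t(v)=\|v_t\|^2$ and $I_p(v)=\|v_p\|^2$ in (\ref{It})--(\ref{Ip}). Your additional checks (that $\mathrm{Ad}_U(v)\in\mathfrak{u}(M)$ and the Parseval reconciliation of the two formulas for each invariant) are sound but more detail than the paper spells out.
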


These quantities remind of the ``energies'' in Fourier series expansions and will be conserved under linear optical transformations $U \in \mathrm{im} \varphi_{m,M}$.

\section{Allowed and forbidden transformations}\label{forbidden}
The existence of conserved quantities gives us a necessary criterion that identifies when a particular input photonic state cannot be taken to a desired output state using only linear optics. We just need to describe the evolution in terms of skew-Hermitian matrices.

\subsection{Density matrix description}
Consider the density matrices of the states going through the linear interferometer 
\begin{equation}
\rho=\sum_{i=1}^{n}p_i \ket{\psi}\bra{\psi},
\end{equation} 
where we have a statistical mixture so that we can find the pure state $\ket{\psi}$ with a probability $p_i$. These matrices are Hermitian and we can always define a skew-Hermitian matrix $v=i\rho \in \mathfrak{u}(M)$ which, in general, will have components both in the tangent and the perpendicular subalgebras.

The evolution of the state represented by the density matrix $\rho$ under a unitary $U \in U(M)$ is
\begin{equation}
\label{AdjEvo}
\rho \rightarrow U\rho U^\dag.
\end{equation}
Introducing an $i$ factor, we can see the output state $i\rho_{\text{out}}=iU\rho_{\text{in}}U^\dag$ corresponds to the effect of the adjoint map on the input. As any operation $U\in  \mathrm{im} \varphi_{m,M}$ must preserve $I_t$ and $I_p$ we have a necessary condition for the transition $\rho_{\text{in}} \rightarrow \rho_{\text{out}}$ to be possible. 

\begin{ftheo}
If an input density matrix $\rho_{\text{in}}$ can be taken to $\rho_{\text{out}}$ using a linear optical system then:
\begin{equation}
I_t(i\rho_{\text{in}})=I_t(i\rho_{\text{out}})
\end{equation}
and
\begin{equation}
I_p(i\rho_{\text{in}})=I_p(i\rho_{\text{out}})
\end{equation}
for the invariants defined in Eqs. (\ref{It}-~\ref{Ip}) which result from taking the sum of the squared projection coefficients of the density matrices into the tangent (image algebra) and perpendicular subspaces.
\end{ftheo}

This is a general result which includes mixed input states, which are also described as Hermitian matrices and evolve following Eq. (\ref{AdjEvo}). If the input state is a pure state, then the norm of the corresponding matrix is 1, as $\langle i\rho, i\rho \rangle =\tr(\rho^2)=1$.

On the other hand
\begin{equation}
\langle i\rho, i\rho \rangle=\left \langle \left(\sum_{i}^{m^2} \langle i\rho , c_i^t\rangle c_i^t+\sum_{j}^{M^2-m^2} \langle i\rho,c_j^p \rangle c_j^p\right),  \left(\sum_{i}^{m^2} \langle i\rho , c_i^t\rangle c_i^t+\sum_{j}^{M^2-m^2} \langle i\rho,c_j^p \rangle c_j^p\right) \right\rangle.
\end{equation}
The inner product is linear and the cross products $\langle c_i^t, c_j^p \rangle=0$ for any valid pair of $i$ and $j$. In our orthonormal bases $\langle c_i^t, c_j^t \rangle$ and $\langle c_i^p, c_j^p \rangle$ are $0$ for any $i\neq j$ and $1$ for $i=j$. This means 
\begin{equation}
\langle i\rho, i\rho \rangle=\sum_{i}^{m^2} |\langle i\rho , c_i^t\rangle |^2 +\sum_{j}^{M^2-m^2} |\langle i\rho,c_j^p \rangle |^2=I_t(\rho)+I_p(\rho)
\end{equation}
and $I_p(i\rho)=1-I_t(i\rho)$ for pure states. For mixed states
\begin{equation}
\label{mixedsum}
    I_t(i\rho)+I_p(i\rho)=\tr(\rho^2)<1.
\end{equation}
As linear optical evolution does not induce any measurement or loss of coherence, transitions between two mixed states with density matrices $\rho_1$ and $\rho_2$ with $\tr(\rho_1^2)\neq \tr(\rho_2^2)$ are impossible. Apart from the direct physical argument, we can see that, if Eq. \eqref{mixedsum} is satisfied, then the two mixed states cannot preserve both the tangent and the perpendicular invariant. Their sums have to give different values of $\tr(\rho^2)$ meaning at least one of the invariants is different for $\rho_1$ and $\rho_2$.

\subsection{Example for two photons and two modes}
We can see a direct application of this result for the state space described in Section \ref{orthexample}. We consider all the states in the Fock basis $\{ \ket{20}, \ket{11}, \ket{02}\}$ and compute the sum of the coefficients of the projection to each basis state numerically. We used the basis matrix calculation in the Python quantum linear optics simulation package QOptCraft \cite{GGMG23}, which has been expanded to add new custom functions created for this purpose \cite{QOptCraft}.

For the states in the Fock basis, we get the invariants
\begin{equation}
I_t(i\ket{20}\bra{20})= 0.83333, \hspace{2ex} I_t(i\ket{11}\bra{11})=0.33333, \hspace{2ex} I_t(i\ket{02}\bra{02})= 0.83333.\nonumber
\end{equation}

We can see, for instance, that it should be possible to go from $\ket{20}$ to $\ket{02}$ and back, which is, indeed, trivial. If we imagine the modes are different paths for the light, swapping the positions of the first and second paths gives the desired operation. However, we see it is impossible to go exactly from $\ket{11}$ to $\ket{20}$ or $\ket{02}$. Superpositions can be reached. In Hong-Ou-Mandel interference in a balanced beamsplitter, an input $\ket{11}$ becomes $\frac{\ket{20}-\ket{02}}{\sqrt{2}}$. 

As expected, $I_t\left(\frac{i}{2}(\ket{20}-\ket{02})(\bra{20}-\bra{02})\right)=0.33333=I_t(i\ket{11}\bra{11})$.

\subsection{The invariant conservation condition is necessary but not sufficient. Counterexample: NOON states.}\label{counterexample}
While one might be tempted to expect the invariant conservation to be also a sufficient condition, we can easily prove this is not the case. 

Consider the so-called NOON states of the form
\begin{equation}
\ket{NOON_n}=\frac{\ket{n0}+\ket{0n}}{\sqrt{2}}
\end{equation}
with $n$ photons in two modes. These states have applications to quantum metrology \cite{LKD02, GLM11}, quantum sensing \cite{DRC17}, to the study of Bell-type inequalities \cite{WLD07} and in precision interferometry \cite{BIW96}, to name a few. 

We start from the no-go result in \cite{vLU07}: if we take from a separable input state with $N$ modes, including ancillas, we cannot obtain a NOON state using only linear optics if any mode has a number of photons greater than the total number of measured photons in the ancillary modes $M$ plus 1. If we have no ancillary modes $N=m=2$ and $M=0$ and then it is only possible to generate NOON states from input states with no more than one photon in any given mode. If $n>2$ there is no input with less than two photons and $\ket{NOON_n}$ cannot be produced from any input state in the Fock basis.

However, numerical tests for inputs $\ket{\frac{n}{2}\frac{n}{2}}$ and a desired output $\ket{NOON_n}$ return the same values for the tangent invariant. For instance, our numerical calculations give the same invariant for the separable inputs and the corresponding NOON states \cite{QOptCraft}:
\begin{align*}
I_t(\ket{11}\bra{11})&=I_t(\ket{NOON_2}\bra{NOON_2})=0.33333, \nonumber\\
I_t(\ket{22}\bra{22})&=I_t(\ket{NOON_4}\bra{NOON_4})=0.20000, \nonumber\\
I_t(\ket{33}\bra{33})&=I_t(\ket{NOON_6}\bra{NOON_6})=0.14286, \nonumber\\
I_t(\ket{44}\bra{44})&=I_t(\ket{NOON_8}\bra{NOON_8})=0.11111, \nonumber\\
I_t(\ket{55}\bra{55})&=I_t(\ket{NOON_{10}}\bra{NOON_{10}})= 0.09090, \nonumber\\
\end{align*}
but only the first transformation is possible (which is basically the Hong-Ou-Mandel interference from the previous section and a phase shift).

\section{Explicit formulae for the tangent invariant of a pure state and consequences}
\label{formulae}
The value of the tangent invariant of a given state can be computed explicitly from a closed formula.

\begin{theorem}\label{teo:tria}
    Given a pure state $\ket{\Psi} \in \mathcal{H}_{m,n}$, the tangent invariant of the associated density matrix is 
    $$
\begin{aligned}
    I_t(i\rho_\Psi)=&C_1(m,n)+C_2(m,n)\sum_{i< j} \left(\langle\hat{a}^\dag_i\hat{a}_j\rangle_\Psi\langle\hat{a}^\dag_j\hat{a}_i\rangle_\Psi-\langle \hat{n}_i\rangle_\Psi\langle \hat{n}_j\rangle_\Psi\right),
\end{aligned}
    $$
    with 
    $$
\begin{aligned}
    C_1(m,n)=&\frac{(m n+1) \Gamma (m+1) \Gamma (n+1)}{\Gamma (m+n+1)},\\
    C_2(m,n)=&\frac{2 \Gamma (m+2) \Gamma (n)}{\Gamma (m+n+1)}.\\
\end{aligned}
    $$
\end{theorem}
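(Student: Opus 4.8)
\emph{Proof proposal.} The plan is to compute $I_t(i\rho_\Psi)=\Vert(i\rho_\Psi)_t\Vert^2$ directly as a quadratic form in the (non-orthonormal) spanning set $\{b^n_j,b^e_{jk},b^f_{jk}\}$ of $\mathfrak{d}$ from Eq.~(\ref{ImBasis}), without passing through the Gram--Schmidt basis $\{c^t_i\}$. Writing the tangent component $(i\rho_\Psi)_t=\sum_\alpha\lambda_\alpha b_\alpha$ and using $i\rho_\Psi-(i\rho_\Psi)_t\perp\mathfrak{d}$, one gets $\Vert(i\rho_\Psi)_t\Vert^2=\langle i\rho_\Psi,(i\rho_\Psi)_t\rangle=\vec p^{\,\top}G^{-1}\vec p$, where $G_{\alpha\beta}=\langle b_\alpha,b_\beta\rangle$ is the Gram matrix of the spanning set and $p_\alpha=\langle i\rho_\Psi,b_\alpha\rangle$. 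A one-line computation with $\langle u,v\rangle=-\tr(uv)$ and $\rho_\Psi^\dag=\rho_\Psi$ gives the projection coefficients $\langle i\rho_\Psi,b^n_j\rangle=\langle\hat n_j\rangle_\Psi$, $\langle i\rho_\Psi,b^e_{jk}\rangle=\RE\langle\hat a^\dag_j\hat a_k\rangle_\Psi$ and $\langle i\rho_\Psi,b^f_{jk}\rangle=\IM\langle\hat a^\dag_j\hat a_k\rangle_\Psi$, so the $b^e$ and $b^f$ contributions will assemble into $|\langle\hat a^\dag_j\hat a_k\rangle_\Psi|^2=\langle\hat a^\dag_j\hat a_k\rangle_\Psi\langle\hat a^\dag_k\hat a_j\rangle_\Psi$.

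The next step is the Gram matrix. Since $\hat a^\dag_j\hat a_k$ with $j\ne k$ sends each Fock state to an orthogonal one, the generators $b^e_{jk},b^f_{jk}$ are mutually orthogonal and orthogonal to every $b^n_l$, and by invariance of the trace under permutations of the modes $\langle b^n_j,b^n_j\rangle=A$, $\langle b^n_j,b^n_l\rangle=B$ ($j\ne l$) and $\Vert b^e_{jk}\Vert^2=\Vert b^f_{jk}\Vert^2=D$ are independent of the indices. So $G$ is block diagonal: the $m\times m$ block on the $b^n$'s is $(A-B)\,\mathrm{Id}+B\,J$ with $J$ the all-ones matrix, and the block on the remaining $m(m-1)$ coordinates is $D\,\mathrm{Id}$. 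The constants are traces over $\mathcal{H}_{m,n}$, which I would obtain from the generating function $\sum_{\mathbf n\in\mathcal{H}_{m,n}}x_1^{n_1}\cdots x_m^{n_m}=[t^n]\prod_i(1-tx_i)^{-1}$: this yields $\tr(\hat n_j)=\binom{m+n-1}{n-1}=:S$, $B=\tr(\hat n_j\hat n_l)=\binom{m+n-1}{n-2}$, and $\sum_{\mathbf n}n_j(n_j-1)=2B$, hence $A=\tr(\hat n_j^2)=2B+S$; moreover $D=\tfrac14\tr\!\big((\hat a^\dag_j\hat a_k+\hat a^\dag_k\hat a_j)^2\big)=\tfrac12\big(\tr(\hat a^\dag_j\hat a_k\hat a^\dag_k\hat a_j)+\tr(\hat a^\dag_k\hat a_j\hat a^\dag_j\hat a_k)\big)=\tfrac12(B+S)=\tfrac12(A-B)$. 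Finally $mA+m(m-1)B=\tr(\hat n^2)=n^2M$, because $\hat n\equiv n\,\mathrm{Id}$ on $\mathcal{H}_{m,n}$, so $A+(m-1)B=n^2M/m$.

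The assembly is mechanical. Inverting the $m\times m$ block, $\big((A-B)\,\mathrm{Id}+B\,J\big)^{-1}=\tfrac{1}{A-B}\mathrm{Id}-\tfrac{B}{(A-B)(A+(m-1)B)}J$, and using $\sum_j\langle\hat n_j\rangle_\Psi=n$ one gets $I_t(i\rho_\Psi)=\tfrac{1}{A-B}\sum_j\langle\hat n_j\rangle_\Psi^2-\tfrac{Bn^2}{(A-B)(A+(m-1)B)}+\tfrac1D\sum_{i<j}\langle\hat a^\dag_i\hat a_j\rangle_\Psi\langle\hat a^\dag_j\hat a_i\rangle_\Psi$. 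Substituting $\sum_j\langle\hat n_j\rangle_\Psi^2=n^2-2\sum_{i<j}\langle\hat n_i\rangle_\Psi\langle\hat n_j\rangle_\Psi$ and $D=(A-B)/2$ turns this into $C_1+\tfrac{2}{A-B}\sum_{i<j}\big(\langle\hat a^\dag_i\hat a_j\rangle_\Psi\langle\hat a^\dag_j\hat a_i\rangle_\Psi-\langle\hat n_i\rangle_\Psi\langle\hat n_j\rangle_\Psi\big)$, which is exactly the claimed form with $C_2=\tfrac{2}{A-B}$ and $C_1=\tfrac{n^2}{A-B}\cdot\tfrac{A+(m-2)B}{A+(m-1)B}$. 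It then remains to identify the constants: $A-B=B+S=\binom{m+n}{n-1}$ by Pascal's rule, giving $C_2=2\Gamma(m+2)\Gamma(n)/\Gamma(m+n+1)$; and $A+(m-2)B=\tfrac{mn+1}{m+1}\binom{m+n-1}{n-1}$ together with $A+(m-1)B=n^2M/m$ collapses $C_1$ to $(mn+1)\Gamma(m+1)\Gamma(n+1)/\Gamma(m+n+1)$ after a handful of binomial absorption identities.

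I expect the combinatorics to be the only genuine obstacle: the closed evaluation of $A$, $B$, $S$, $D$ over $\mathcal{H}_{m,n}$, and above all the coincidence $D=\tfrac12(A-B)$ --- equivalently $A=2B+S$, i.e.\ $\sum_{\mathbf n}n_j(n_j-1)=2\sum_{\mathbf n}n_jn_l$ for $j\ne l$ --- which is exactly what lets the photon-number terms and the hopping terms be collected under a single prefactor; without it the answer would not take the stated two-term form. The last passage from the binomial coefficients to the $\Gamma$-function expressions for $C_1$ and $C_2$ is routine verification.
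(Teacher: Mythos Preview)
Your proof is correct and follows essentially the same route as the paper's: compute the Gram matrix of the non-orthonormal basis (\ref{ImBasis}), invert it block by block, and read off $I_t=\vec p^{\,\top}G^{-1}\vec p$; the paper just names the constants $A,B,C$ (your $S$), evaluates them with \textsf{Mathematica} rather than a generating function, and records the key coincidence $A-B=B+C$ (your $A=2B+S$, i.e.\ $D=\tfrac12(A-B)$) only in the very last line without singling it out. One cosmetic slip: in your chain $D=\tfrac14\tr(\cdots)^2=\tfrac12(\tr+\tr)$ the middle coefficient should be $\tfrac14$, since only the two cross terms survive the trace; each of those traces equals $B+S$, so the final value $D=\tfrac12(B+S)=\tfrac12(A-B)$ is correct regardless.
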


The proof is given in the Appendix \ref{proofexplicit}. For states which have a succinct description in the Fock (number state) basis, this approach can be more efficient than doing the computations with the algebra basis. We examine a few immediate results in the next sections.

\subsection{Reduced criterion for allowed transformations}
We can use Theorem \ref{teo:tria} to give a simpler criterion for allowed transformations.

\begin{corollary}\label{corollary_3}
    Let $U\in \mathrm{im} \varphi_{m,M}$. Suppose that 
    $$
\ket{\Psi'}=U\ket{\Psi}.
    $$
    Then
\begin{equation}\label{eq:corollary_3}
\sum_{i< j} \left(\langle\hat{a}^\dag_i\hat{a}_j\rangle_{\Psi'}\langle\hat{a}^\dag_j\hat{a}_i\rangle_{\Psi'}-\langle \hat{n}_i\rangle_{\Psi'}\langle \hat{n}_j\rangle_{\Psi'}\right)=\sum_{i< j} \left(\langle\hat{a}^\dag_i\hat{a}_j\rangle_\Psi\langle\hat{a}^\dag_j\hat{a}_i\rangle_\Psi-\langle \hat{n}_i\rangle_\Psi\langle \hat{n}_j\rangle_\Psi\right).
\end{equation}
\end{corollary}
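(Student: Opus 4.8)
The plan is to deduce the corollary from Theorem~\ref{teo:tria} combined with the invariance of the tangent invariant under the adjoint action of optical realizations, which was established in Corollary~\ref{coro:1}. The argument is essentially a substitution, so the emphasis is on checking that nothing degenerates.

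First I would translate the hypothesis into the language of the adjoint map. Since $\ket{\Psi'}=U\ket{\Psi}$, the associated density matrices satisfy $\rho_{\Psi'}=U\rho_\Psi U^\dag$, hence $i\rho_{\Psi'}=\mathrm{Ad}_U(i\rho_\Psi)$. Because $U\in\mathrm{im}\,\varphi_{m,M}$ acts inside the fixed Hilbert space $\mathcal{H}_{m,n}$, the states $\ket{\Psi}$ and $\ket{\Psi'}$ share the same number of modes $m$ and the same total photon number $n$; this is what makes the constants appearing below identical for the two states.

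Next I would invoke Corollary~\ref{coro:1}: for $U\in\mathrm{im}\,\varphi_{m,M}$ and $v=i\rho_\Psi$, with $w=UvU^\dag=i\rho_{\Psi'}$, we have $\Vert v_t\Vert=\Vert w_t\Vert$, i.e. $I_t(i\rho_{\Psi'})=I_t(i\rho_\Psi)$ by the definition \eqref{It}. Then I would expand both sides using Theorem~\ref{teo:tria}. Abbreviating $S(\Phi):=\sum_{i<j}\bigl(\langle\hat a^\dag_i\hat a_j\rangle_\Phi\langle\hat a^\dag_j\hat a_i\rangle_\Phi-\langle\hat n_i\rangle_\Phi\langle\hat n_j\rangle_\Phi\bigr)$, Theorem~\ref{teo:tria} gives $I_t(i\rho_\Psi)=C_1(m,n)+C_2(m,n)\,S(\Psi)$ and $I_t(i\rho_{\Psi'})=C_1(m,n)+C_2(m,n)\,S(\Psi')$ with the \emph{same} $C_1(m,n)$, $C_2(m,n)$. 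Equating these two expressions and cancelling $C_1(m,n)$ leaves $C_2(m,n)\,S(\Psi')=C_2(m,n)\,S(\Psi)$.

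Finally I would observe that $C_2(m,n)=\dfrac{2\,\Gamma(m+2)\,\Gamma(n)}{\Gamma(m+n+1)}$ is strictly positive for every $m\ge 1$ and $n\ge 1$ (the case $n=0$ being the trivial one-dimensional vacuum sector, where the claim is empty), since the Gamma function is positive on the positive reals; dividing by $C_2(m,n)$ yields $S(\Psi')=S(\Psi)$, which is exactly \eqref{eq:corollary_3}. The only point needing any care is this non-vanishing of $C_2(m,n)$—everything else is a direct substitution—so I do not anticipate a genuine obstacle.
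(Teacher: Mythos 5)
Your proposal is correct and matches the paper's (implicit) argument exactly: the corollary is intended as an immediate consequence of Theorem~\ref{teo:tria} together with the conservation of $I_t$ under $\mathrm{Ad}_U$ for $U\in\mathrm{im}\,\varphi_{m,M}$, followed by cancelling $C_1(m,n)$ and dividing by the nonzero constant $C_2(m,n)$. Your extra care about the positivity of $C_2(m,n)$ is a reasonable touch but does not change the route.
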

Here we have the sums of the expected values of pairs of creation and destruction operators and photon number operators 
\begin{equation}
\langle \hat{n}_i\rangle_\Psi=\bra{\Psi}\hat{n}_i\ket{\Psi}
\text{ and }\langle \hat{a}^\dag_i\hat{a}_j\rangle_\Psi=\bra{\Psi}\hat{a}^\dag_i\hat{a}_j\ket{\Psi}.
\end{equation}

Even if we have a superposition of many photon number states, the only terms that are not zero are the terms of the expected values of the photon number operator for each mode (counting the photons in those modes) and the interaction of states which are ``one photon away'' from each other. In $\bra{\Psi}\hat{a}^\dag_i\hat{a}_j\ket{\Psi}$ we get $\langle\Psi|\Phi\rangle$ for a transformed state $\ket{\Phi}$ where we have moved one photon from mode $j$ to mode $i$. If, after moving one photon, none of the states in $\ket{\Phi}$ goes back to $\ket{\Psi}$, the expected value will be zero.  This simplifies exact calculations for many states of interest, as we will see in the following sections.

\section{Transitions from \texorpdfstring{$n$}{n} photons in a single mode to \texorpdfstring{$n - k$}{n-k} and \texorpdfstring{$k$}{k} photons in two different modes are forbidden}\label{ntonminusk}
We can use Corollary \ref{corollary_3} to prove a few general no-go results for linear optics. In this section we show that transformation between photon number states from the Fock basis are, in general, forbidden if the total number of photons $n$ is not equally divided into the $m$ modes (the sets with the number of photons in the occupied modes that sum to $n$ are different).

\begin{corollary}
A Fock state input with $n$ photons in just one of a total of $m\geq 2$ modes cannot be exactly transformed into an output state with $n-k$ photons in one mode and $k$ photons in a second mode using only linear optics for any $k>0$.
\end{corollary}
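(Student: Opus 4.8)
The plan is to apply Corollary \ref{corollary_3} directly, computing the quantity
$$
Q(\Psi):=\sum_{i<j}\left(\langle\hat a_i^\dagger\hat a_j\rangle_\Psi\langle\hat a_j^\dagger\hat a_i\rangle_\Psi-\langle\hat n_i\rangle_\Psi\langle\hat n_j\rangle_\Psi\right)
$$
for both the input and the target state, and showing the two values differ, which rules out the existence of a $U\in\mathrm{im}\,\varphi_{m,M}$ with $U\ket{\Psi}=\ket{\Psi'}$. Throughout one assumes $0<k<n$: for $k=n$ the ``output'' is again $n$ photons in a single mode and the transformation is the trivial mode permutation, so that boundary case has to be excluded for the statement to hold, and I would state this explicitly.

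First I would evaluate $Q$ on the input $\ket{\Psi}=\ket{n,0,\ldots,0}$. Only mode $1$ is occupied, so $\langle\hat n_1\rangle_\Psi=n$ while $\langle\hat n_i\rangle_\Psi=0$ for $i\geq2$; since no pair $i<j$ can have both modes occupied, every term $\langle\hat n_i\rangle_\Psi\langle\hat n_j\rangle_\Psi$ vanishes. For the off-diagonal terms, in any pair $i<j$ one has $j\geq2$, so $\hat a_j$ annihilates $\ket{n,0,\ldots,0}$ and hence $\langle\hat a_i^\dagger\hat a_j\rangle_\Psi=0$. Therefore $Q(\Psi)=0$.

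Next I would evaluate $Q$ on the target $\ket{\Psi'}=\ket{n-k,k,0,\ldots,0}$ (relabelling modes if necessary so the two occupied modes are $1$ and $2$). The operator $\hat a_i^\dagger\hat a_j$ with $i\neq j$ moves one photon between two distinct modes and therefore sends any number state to a \emph{different} number state, orthogonal to the original; hence $\langle\hat a_i^\dagger\hat a_j\rangle_{\Psi'}=0$ for every pair $i<j$, exactly as in the discussion following Corollary \ref{corollary_3}. The only surviving contribution is the number term for the pair $(1,2)$, giving $Q(\Psi')=-\langle\hat n_1\rangle_{\Psi'}\langle\hat n_2\rangle_{\Psi'}=-(n-k)k$. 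Since $0<k<n$ this is strictly negative, so $Q(\Psi')\neq Q(\Psi)$ and Corollary \ref{corollary_3} forbids the transition.

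There is no serious obstacle here: the argument is a one-line computation once Corollary \ref{corollary_3} is in hand. The only points that need care are (i) noting that all off-diagonal expectation values vanish because relocating a single photon between two distinct modes always produces a number state orthogonal to a fixed number state, and (ii) flagging the case $k=n$, which corresponds to the genuinely realizable mode swap and must be excluded. One could instead run the whole argument through the tangent invariant $I_t$ using Theorem \ref{teo:tria}, but working with $Q$ directly avoids carrying the constants $C_1(m,n),C_2(m,n)$.
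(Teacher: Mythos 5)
Your proposal is correct and follows essentially the same route as the paper: apply Corollary \ref{corollary_3}, observe that for Fock states the $\langle\hat a_i^\dagger\hat a_j\rangle$ terms vanish so only $\sum_{i<j}\langle\hat n_i\rangle\langle\hat n_j\rangle$ matters, and compare $0$ for the input with $(n-k)k$ for the target. Your explicit flagging of the $k=n$ boundary case matches the paper's own remark that $k=0$ and $k=n$ are the trivial permutation cases.
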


\begin{proof}
Corollary \ref{corollary_3} implies that, for states in the photon number basis, we only need to worry about the sum 
\begin{equation}\label{reduced}
    \sum_{i< j} \langle \hat{n}_i\rangle_\Psi\langle \hat{n}_j\rangle_\Psi=\sum_{i< j}  n_i n_j.
\end{equation}

Without any loss of generality, we can assume we are transforming an input $\ket{n0\ldots 0}$ into $\ket{(n-k) k 0 \ldots 0}$. Imagine a linear optical system where the modes correspond to different physical paths for the photons. We can always reorder the mode number by switching the paths. This corresponds to a linear interferometer with a scattering matrix $S$ that is a permutation matrix, which is always possible. This is true in general. For any state transition we just need to look into one of the possible permutations of each state.

From Eq. \eqref{reduced} we can see the sum is $0$ for $\ket{n0\ldots 0}$ and $(n-k)k$ for $\ket{(n-k) k 0 \ldots 0}$; the latter is 0 only if $k=0$ or $n=k$ (the trivial cases where we have the original state or a permutation).
\end{proof}
\medskip

The result can be easily generalized to other Fock states. Consider all the restricted partitions of the photon number $n$ up to $m$ terms, i.e. all the ways to write $n$ as a sum of, at most, $m$ integers. We can imagine that each partition into $k$ elements corresponds to the state $\ket{n_1\cdots n_k 0\cdots 0}$. Eq. \eqref{reduced} tells us that most of the exact transformation between states in the Fock basis corresponding to different partitions (and their permuted versions) are forbidden. 

For instance, the state with $n$ photons in one mode has a 0 sum in Eq. \eqref{reduced}. Any other state from the photon number basis has at least two occupied modes and a sum greater than 0. This means the $\ket{n0\cdots0}$ state cannot be transformed deterministically into a single state with a different photon distribution.

Most of the transitions for states corresponding to different partitions can be shown to be impossible with numerical experiments~\cite{QOptCraft}. While there are some transformations that would satisfy Eq.~\eqref{reduced}, like going from $\ket{330}$ to $\ket{114}$ with terms that sum to 9, it is not clear they are not a result of the symmetry in the definition for the invariant, like in the counterexample in  Section \ref{counterexample} for NOON states. We conjecture that, in the given example, the transformation is indeed impossible, but it remains an open problem to study the behaviour as the number of photons and modes grows.

In general, the result is in line with the expected bunching and antibunching behaviour of photons in beam splitters: photons from Fock states tend to stick together in the outputs, unless we admit superpositions of different Fock states, and finding systems that separate photons is not trivial.

 \section{Impossibility of Bell state generation from separable photon number states with photon number ancillary states}\label{bell}

Bell states are a crucial resource in quantum information. They are a key element in Ekert's quantum key distribution \cite{Eke91}, quantum teleportation \cite{BBC93}, entanglement swapping \cite{PBW98}, superdense coding \cite{BW92} or quantum secure direct communication \cite{DLL03,SZL22}. Also, they can be used in universal quantum computation with linear optics \cite{BR05} and are a fundamental ingredient in quantum networks \cite{PBW98, GZG22} or distributed secure quantum machine learning \cite{SZ17}.

While they can be generated with nonlinear processes at moderate rates \cite{KMW95,BCG07}, there is a strong interest in creating entangled Bell states with linear optics acting on simple input states. There are many different probabilistic proposals \cite{ZPM02,ZBL08,FSK21} conditioned on certain measurement results. Most proposals include a herald: a recognizable ancillary state that, when measured in part of the output modes, guarantees the unmeasured modes contain the desired Bell state. These schemes have a limited probability of success, which is given by the probability of finding the herald state in the measurement. We can wonder whether this is a limitation of the resources and if using larger state spaces with more modes and ancillary photons would allow to generate a Bell state with certainty from simple input states.  We give a negative answer for a large family of heralded schemes which expand previous impossibility proofs for systems with two and three photons \cite{SLM17}.

If we consider two-level systems, or qubits, with basis states $\ket{0}$ and $\ket{1}$, a Bell state is defined as $$\frac{\ket{00}+\ket{11}}{\sqrt{2}}.$$ This is an entangled state where a measurement in the given computational basis will always produce the same result on both qubits. We consider the analogue in linear optics for dual-rail encoding with a photon number description $\ket{0}\equiv\ket{01}$ and $\ket{1}\equiv\ket{10}$. The desired Bell state is 
$$
\frac{\ket{0101}+\ket{1010}}{\sqrt{2}}
$$ 
where we have four orthogonal photonic modes. The most usual physical realization of this state is two separate paths with two orthogonal polarizations having a state of the form 
$$
\frac{\ket{H}_1\ket{H}_2+\ket{V}_1\ket{V}_2}{\sqrt{2}}
$$ 
for horizontally and vertically polarized photons, but the description in terms of four modes also covers more general situations.

We cannot produce this output state as a subsystem of the output modes even if we use any number of ancillary photons and modes when the input is a state from the Fock basis. The only assumption is that the output in the measured ancillary modes must also be a state in the Fock basis. 

\begin{corollary}
An input Fock state $\ket{\psi} = \ket{n_1\, \ldots\, n_m}$ cannot be deterministically transformed with linear optics into an output Bell state with an ancilla
$$\ket{\psi'} = \frac{1}{\sqrt{2}}\big(\ket{1010}+\ket{0101}\big)\ket{\text{aux}\,'} \:,$$
where $\ket{\text{aux}\,'} = \ket{n_5'\, n_6'\, \ldots\, n_{m}'}$ is the ancilla.

\end{corollary}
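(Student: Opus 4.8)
The plan is to derive the statement from the reduced invariant of Corollary \ref{corollary_3} via a parity obstruction. Suppose, for contradiction, that some $(m,n)$-optical realization $U$ satisfies $\ket{\psi'}=U\ket{\psi}$ (this already forces photon-number conservation, $\sum_i n_i = 2 + \sum_{j\ge 5} n_j'$). By Corollary \ref{corollary_3}, the quantity
$$
Q(\Psi):=\sum_{i<j}\Big(\langle\hat{a}^\dag_i\hat{a}_j\rangle_\Psi\langle\hat{a}^\dag_j\hat{a}_i\rangle_\Psi-\langle\hat{n}_i\rangle_\Psi\langle\hat{n}_j\rangle_\Psi\Big)
$$
must coincide on $\ket{\psi}$ and on $\ket{\psi'}$. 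The idea is that $Q(\ket{\psi})$ is necessarily an integer while $2\,Q(\ket{\psi'})$ is necessarily odd, a contradiction. For the input this is immediate: since $\ket{\psi}=\ket{n_1\ldots n_m}$ is a Fock state, $\langle\hat{a}^\dag_i\hat{a}_j\rangle_\psi=0$ whenever $i\neq j$ (hopping one photon between two modes produces an orthogonal Fock state) and $\langle\hat{n}_i\rangle_\psi=n_i$, so $Q(\ket{\psi})=-\sum_{i<j}n_i n_j\in\mathbb{Z}$.

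For the output I would split the $m$ modes into the four dual-rail modes $\{1,2,3,4\}$ and the ancilla modes $\{5,\ldots,m\}$, and treat the three kinds of mode pairs separately. The key computation is that $\langle\hat{a}^\dag_i\hat{a}_j\rangle_{\psi'}=0$ for every $i\neq j$: for two ancilla modes because $\ket{\text{aux}\,'}$ is a Fock state; for one dual-rail mode paired with one ancilla mode because $\hat{a}^\dag_i\hat{a}_j\ket{\psi'}$ changes the photon number carried by each of the two blocks and is hence orthogonal to $\ket{\psi'}$; and for two dual-rail modes because, with $\ket{\psi_B}=\tfrac{1}{\sqrt2}(\ket{1010}+\ket{0101})$, the vectors $\hat{a}_1\ket{\psi_B},\ldots,\hat{a}_4\ket{\psi_B}$ are pairwise orthogonal (single-photon) states, whence $\langle\hat{a}^\dag_i\hat{a}_j\rangle_{\psi_B}=\bra{\psi_B}\hat{a}^\dag_i\hat{a}_j\ket{\psi_B}=0$. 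The photon-number expectations are $\langle\hat{n}_i\rangle_{\psi'}=\tfrac12$ for $i\in\{1,2,3,4\}$ and $\langle\hat{n}_i\rangle_{\psi'}=n_i'$ for $i\ge 5$. Adding the contributions of the three blocks,
$$
Q(\ket{\psi'})=-\binom{4}{2}\tfrac14-\Big(\sum_{i\le 4}\tfrac12\Big)\Big(\sum_{j\ge 5}n_j'\Big)-\sum_{5\le i<j}n_i' n_j'=-\tfrac32-2n_{\mathrm{aux}}-\sum_{5\le i<j}n_i' n_j',
$$
where $n_{\mathrm{aux}}=\sum_{j\ge 5}n_j'$.

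Then $2\,Q(\ket{\psi'})=-3-4n_{\mathrm{aux}}-2\sum_{5\le i<j}n_i' n_j'$ is an odd integer, whereas $2\,Q(\ket{\psi})=-2\sum_{i<j}n_i n_j$ is even; so $Q(\ket{\psi})\neq Q(\ket{\psi'})$, contradicting Corollary \ref{corollary_3}, and no such $U$ can exist. I expect the only step needing genuine care to be the vanishing of the off-diagonal hopping terms of $\ket{\psi'}$, and inside that the dual-rail pair: the crux is that $\ket{1010}$ and $\ket{0101}$ differ in all four modes, so no single creation-annihilation pair can connect them, and removing one photon from $\ket{\psi_B}$ always lands in a distinct one-photon mode; everything else is routine bookkeeping of expectation values. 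Finally, it is worth stressing that the argument uses nothing about the input other than it being a Fock state (which gives $Q(\ket{\psi})\in\mathbb{Z}$) and nothing about the herald other than it being a Fock state --- which is precisely what makes the no-go statement hold for an arbitrary number of ancillary modes and photons.
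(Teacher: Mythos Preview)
Your proof is correct and follows essentially the same approach as the paper: both apply Corollary \ref{corollary_3}, observe that all hopping terms $\langle\hat a_i^\dag\hat a_j\rangle$ vanish for $\ket{\psi}$ and $\ket{\psi'}$, and then note that the number-operator sum is an integer on the input side but an integer plus $3/2$ on the output side. Your parity phrasing ($2Q$ odd versus even) is exactly the paper's ``invariants at least $1/2$ apart'' observation, and your case-by-case justification of the vanishing off-diagonal terms is somewhat more explicit than the paper's one-line remark that $\ket{1010}$ and $\ket{0101}$ are more than one photon apart.
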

\begin{proof}
    This result is proved after the application of formula \eqref{eq:corollary_3} in Corollary \ref{corollary_3}. First, note that $\langle \hat{a}^\dagger_i \hat{a}_j \rangle = 0$\,, since the two Fock states in $\ket{\psi'}$ are more than one photon away from each other. Thus, we only need to compute the terms containing $\langle \hat{n}_i\rangle$:
    \begin{alignat*}{2}
        & \langle \hat{n}_i\rangle_\psi = n_i \:, \qquad && i=1, \ldots m \:;\\[2mm]
        & \langle \hat{n}_i\rangle_{\psi'} = \frac{1}{2} \:, && i=1, \ldots 4 \:;\\[2mm]
        & \langle \hat{n}_i\rangle_{\psi'} = n_i' \:, \qquad && i=5, \ldots m \:.
    \end{alignat*}
    
    To prove the impossibility, we only need to make sure that the difference between the invariants is non-zero. The sum $\sum_{i<j}\langle \hat{n}_i\rangle\langle \hat{n}_j\rangle$ has an integer value for $\ket{\psi}$. It is also an integer for $\ket{\psi'}$ when $i, j=5,\ldots m$. The only terms with, possibly, non-integer terms are those involving $\langle \hat{n}_i\rangle = 1/2$\,. First,
    \begin{equation*}
        \sum_{\substack{i=1\ldots 4\\[1mm]j=5\ldots m}}\langle \hat{n}_i\rangle_{\psi'}\langle \hat{n}_j\rangle_{\psi'} = \frac{1}{2}\,4\sum_{j=5\ldots m} n_j'
    \end{equation*}
    turns out to be also an integer. Only the term independent of the auxiliary state is to make a crucial difference:
        \begin{equation*}
        \sum_{\substack{i<j\\[1mm]i,j=1\ldots 4}}\langle \hat{n}_i\rangle_{\psi'}\langle \hat{n}_j\rangle_{\psi'} = \frac{1}{2}\frac{1}{2}\frac{4(4-1)}{2}= \frac{3}{2}\:;
    \end{equation*}
    which means that the invariants will always be, at least, $1/2$ apart from each other. We conclude that the transition is, indeed, impossible.
    
\end{proof}

While this is a strong negative result there are two important caveats. First, we require the herald to be a fixed state from the Fock basis. There could be a superposition of Fock states in the ancilla that guarantees a deterministic Bell state in the first four modes. However, the proof also excludes any of these superpositions that can be generated using linear optics and any single Fock state, as they can be converted back into Fock states with the inverse transformation in the ancillary modes. These states can always be converted into the states covered in the proof with the inverse linear system in the ancillary modes and, if we can reach them, then the original fixed state may be also attained. Second, the proof only restricts deterministic generation of Bell states. It is still open if there is an upper limit on the maximum probability of success in heralded schemes that add a larger number of photons or modes. We leave it for future investigation whether the relative differences between two invariants can be related to this probability of success.

\section{Impossibility of exact transformation between dual-rail GHZ states and W states with photon number ancillas}\label{GHZtoW}
The same technique can be used to prove the impossibility of transitioning from a state $\ket{\text{GHZ}} = (\ket{000} + \ket{111})/\sqrt{2}$ encoded in dual-rail to a state $\ket{\text{W}} = (\ket{100} + \ket{010} + \ket{010})/\sqrt{3}$ also in dual-rail encoding. These states are already known to be entangled in inequivalent ways, meaning that one cannot use local operations and classical communication to transform one into the other \cite{DVC00}.

\begin{corollary}
A GHZ state with a state number ancilla,
$$\ket{\text{GHZ}}\ket{aux} = \frac{\ket{010101} + \ket{101010}}{\sqrt{2}}\,\ket{n_7\ldots n_m}\:,$$
cannot be deterministically transformed with linear optics into a W state with any state number ancilla,
$$\ket{\text{W}}\ket{aux'} = \frac{\ket{100101} + \ket{011001} + \ket{010110}}{\sqrt{3}} \, \ket{n_7'\ldots n_m'}\:.$$

\end{corollary}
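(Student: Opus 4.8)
The plan is to run the same argument as in the Bell-state proof of Section~\ref{bell}: feed both states into the reduced criterion of Corollary~\ref{corollary_3}, kill all the off-diagonal terms, and then isolate a rational obstruction that no Fock ancilla can absorb. Write $\ket{\psi} = \ket{\text{GHZ}}\ket{aux} = \frac{1}{\sqrt2}(\ket{010101}+\ket{101010})\ket{n_7\ldots n_m}$ and $\ket{\psi'} = \ket{\text{W}}\ket{aux'} = \frac{1}{\sqrt3}(\ket{100101}+\ket{011001}+\ket{010110})\ket{n_7'\ldots n_m'}$, both living on the same $m$ modes (if the total photon numbers on the two sides do not match the statement is vacuous, since linear optics preserves photon number).

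\textbf{Step 1: all off-diagonal expectations vanish.} For $i\neq j$ the operator $\hat a_i^\dag\hat a_j$ moves one photon from mode $j$ to mode $i$, so $\langle\Phi|\hat a_i^\dag\hat a_j|\Phi\rangle\neq 0$ forces two distinct Fock components of $\ket\Phi$ to differ exactly by that single displaced photon (a component with itself is impossible for $i\neq j$). In $\ket\psi$ the two components $\ket{010101}$ and $\ket{101010}$ disagree in all six dual-rail modes; in $\ket{\psi'}$ any two of $\ket{100101}$, $\ket{011001}$, $\ket{010110}$ differ by relocating the dual-rail excitation, i.e.\ by \emph{two} displaced photons; and tensoring with the fixed Fock ancilla only adds disagreement. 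Hence no pair of components is one photon away, every $\langle\hat a_i^\dag\hat a_j\rangle_\psi$ and $\langle\hat a_i^\dag\hat a_j\rangle_{\psi'}$ with $i\neq j$ is zero, and \eqref{eq:corollary_3} collapses to $\sum_{i<j}\langle\hat n_i\rangle_\psi\langle\hat n_j\rangle_\psi = \sum_{i<j}\langle\hat n_i\rangle_{\psi'}\langle\hat n_j\rangle_{\psi'}$.

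\textbf{Step 2: compute and compare.} On the GHZ side $\langle\hat n_i\rangle_\psi = \tfrac12$ for $i=1,\dots,6$ and $\langle\hat n_i\rangle_\psi = n_i$ for $i\ge 7$; on the W side $(\langle\hat n_1\rangle_{\psi'},\dots,\langle\hat n_6\rangle_{\psi'}) = (\tfrac13,\tfrac23,\tfrac13,\tfrac23,\tfrac13,\tfrac23)$ and $\langle\hat n_i\rangle_{\psi'} = n_i'$ for $i\ge 7$. Splitting $\sum_{i<j}$ into qubit--qubit, qubit--ancilla and ancilla--ancilla blocks: the ancilla--ancilla block is an integer on each side; the qubit--ancilla block equals $\big(\sum_{i\le6}\langle\hat n_i\rangle\big)\big(\sum_{j\ge7}n_j\big) = 3\sum_{j\ge7}n_j$ for GHZ and $3\sum_{j\ge7}n_j'$ for W, again integers; and by $\sum_{i<j}a_ia_j = \tfrac12\big((\sum a_i)^2-\sum a_i^2\big)$ the qubit--qubit block is $\tfrac12\big(9-6\cdot\tfrac14\big) = \tfrac{15}{4}$ for GHZ and $\tfrac12\big(9-(3\cdot\tfrac19+3\cdot\tfrac49)\big) = \tfrac12\big(9-\tfrac53\big) = \tfrac{11}{3}$ for W. Thus the two sides of the reduced criterion equal $\tfrac{15}{4}$ and $\tfrac{11}{3}$ modulo integers, and since $\tfrac{15}{4}-\tfrac{11}{3}=\tfrac1{12}\notin\ZZ$ they can never coincide. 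Corollary~\ref{corollary_3} is therefore violated for every choice of ancillas, which proves the impossibility.

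The only genuinely delicate point is Step~1 --- in particular checking that the three dual-rail W components are pairwise \emph{two} photons apart rather than one, so that the $\langle\hat a_i^\dag\hat a_j\rangle$ terms really drop out. Everything after that is the short arithmetic of Step~2, whose punchline is the mismatch of the residues $\tfrac34$ and $\tfrac23$ modulo $1$; the ancilla contributes only integers to both sides and so cannot repair this mismatch, which is exactly why the result holds for an arbitrary number of ancillary modes and photons.
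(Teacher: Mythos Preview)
Your proof is correct and follows essentially the same route as the paper's: both reduce Corollary~\ref{corollary_3} to the $\sum_{i<j}\langle\hat n_i\rangle\langle\hat n_j\rangle$ term, compute the qubit--qubit block as $\tfrac{15}{4}$ versus $\tfrac{11}{3}$, and observe that the ancilla and cross blocks contribute only integers. Your version is in fact more explicit than the paper's on Step~1 (the paper simply gestures at the Bell-state proof), and your use of $\sum_{i<j}a_ia_j=\tfrac12\big((\sum a_i)^2-\sum a_i^2\big)$ is a tidy shortcut for the arithmetic.
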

\begin{proof}
    In a similar fashion to the previous proof, we show that the terms in the invariant arising from $\ket{\text{GHZ}}$ and $\ket{\text{W}}$ give a different non-integer value, while the remaining terms are all integers: 
    \begin{alignat*}{2}
        & \langle \hat{n}_i\rangle_{\text{GHZ}} = \frac{1}{2} \:, \qquad && i=1, \ldots 6 \:;\\[2mm]
        & \langle \hat{n}_i\rangle_{\text{W}} = \frac{1}{3} \:, \qquad && i=1,3,5 \:;\\[2mm]
        & \langle \hat{n}_i\rangle_{\text{W}} = \frac{2}{3} \:, \qquad && i=2,4,6\:.
    \end{alignat*}
    This yields a different invariant for the first 6 modes:
    \begin{equation*}
        \sum_{i<j}\langle \hat{n}_i\rangle_{\text{GHZ}} \, \langle \hat{n}_j\rangle_{\text{GHZ}} = \frac{15}{4} \neq \sum_{i<j}\langle \hat{n}_i\rangle_{\text{W}} \, \langle \hat{n}_j\rangle_{\text{W}} = \frac{11}{3}\:.
    \end{equation*}
    The remaining terms involving the auxiliary photons result in integers; thus, the invariants can never be equal.
    
\end{proof}

This result is consistent with the intuition that converting between different families of entangled states requires entangling gates, which would allow universal quantum computation with linear optics, which is impossible without the assistance of measurement and postcorrection or nonlinear systems.
  
\section{Closing remarks}\label{conclusions}
We have described a necessary criterion that, for any two given input states, identifies when it is impossible to transform one into the other using only linear optical devices. 

For states which are a superposition of few photon number states, the explicit formulas can be efficiently computed, even for large state spaces. For general density matrices there are numerical methods based on the projections on well-defined subspaces.

Both methods have been added as functions in the open software package QOptCraft \cite{GGMG23} and we have made available some self-guided example demonstrations \cite{QOptCraft}.

The closed formulas can be used to give general theoretical no-go results for quantum linear optics that include the impossibility of most transitions between photon number states with a different photon distribution between the possible modes, the impossibility of the exact heralded generation of Bell states from separable Fock inputs and the impossibility of exactly transforming between GHZ and W states with fixed ancillary photon number states.

These results have applications to optical quantum state generation. Advanced entangled optical states are a basic resource in optical quantum computation and quantum communication protocols. Having a formal description of the limitations can help in the search for optical systems producing interesting output states. 

Finally, the software for numerical computation can also help in the automated search for a target output. Computer assisted methods can find useful quantum states of light and there is a growing field of research working with linear optics \cite{Kno16, MNK18,NMR19,  ONK19,BHV20,GEZ20,WMD20,AEK21,KKT21}. The given transformation criterion can be useful as a screening method that can quickly weed out impossible inputs without the need to work with the whole state space.

\section{Authorship contribution statement}
Pablo V. Parellada developed the software used for the article. All four authors contributed equally to the rest of the work.

\section{Declaration of Competing Interest}
The authors declare that they have no known competing financial interests or personal relationships that could have appeared to influence the work reported in this paper.

\section{Data availability}\label{sec:data}

The source code of the library can be found at \url{https://gitlab.tel.uva.es/juagar/qoptcraft}. All the computations for this article are available in a jupyter notebook
\url{https://gitlab.tel.uva.es/juagar/qoptcraft/-/blob/main/examples/article.ipynb}.
The latest version of the package can be cited as \cite{QOptCraft}.

\section{Acknowledgments}
P. V. Parellada has been funded by the European Union NextGenerationEU (PRTRC17.I1) and the Consejer\'ia de Educaci\'on, Junta de Castilla y Le\'on, through QCAYLE project. V. Gimeno i Garcia has been partially supported by the Research Program of the University Jaume I--Project UJI-B2018-35, as well as by the Spanish Government and FEDER grants PID2020-115930GA-I00 (MICINN) and  AICO/2021/252. Concerning J.J. Moyano-Fern\'andez, this work was supported in part by Grant TED2021-130358B-I00 funded by MCIN/AEI/10.13039/501100011033 and by the “European Union 
NextGenerationEU/PRTR”, as well as by Universitat Jaume I, grant GACUJIMA/2023/06. J.C. Garcia-Escartin has been funded by the the European Union NextGenerationEU (PRTRC17.I1) and the Consejer\'ia de Educaci\'on, Junta de Castilla y Le\'on, through QCAYLE project 
and the Spanish Government and FEDER grant PID2020-119418GB-I00 (MICINN).

\section{Note added}
After acceptance, it has been brought to our attention the paper “Multiphoton states related via linear optics” from Piotr Migdał et al.~\cite{MRO14}, that studies the same problem and finds a series of conserved quantities in linear optical transformations using a particle description as well as the mode description we employ.

\section*{Appendix A. Proof of Theorem \ref{teo:tria}}
\label{proofexplicit}
To prove Theorem \ref{teo:tria}, first of all we need to show the following proposition:
\begin{proposition}
Consider the following basis for $\dd$,
    \begin{align}
\label{ImBasis2}
b_j:=&b_{j}^{n}=i\hat{n}_j \mbox{ for } \, j=1,\ldots , m; \nonumber \\
c_{jk}:=&b_{jk}^{e}=\frac{i}{2}\big (\hat{a}^\dag_j \hat{a}_k+\hat{a}^\dag_k \hat{a}_j\big ) \mbox{ for } \, k< j=1,\ldots , m; \nonumber  \\
d_{jk}:=&b_{jk}^{f}=\frac{1}{2}\big (\hat{a}^\dag_j \hat{a}_k-\hat{a}^\dag_k \hat{a}_j\big ) \mbox{ for } \, k< j=1,\ldots , m,
\end{align}
as well as the inner product 
\begin{equation}\label{eqn:skalar}
\langle u,v \rangle:=\frac{1}{2}\mathrm{tr}(u^{\dag} v + v^{\dag}u)
\end{equation}
for any $u,v\in \mathfrak{u}(M)$. We have that
\begin{equation}
\begin{aligned}
    \langle b_i,b_j\rangle =& \left\lbrace\begin{array}{ccc}
  A & \mathrm{if}& i=j \\
  B  & \mathrm{if}& i\neq j
\end{array}\right.\\
\langle b_k,c_{ij}\rangle=&\langle b_k,d_{lm}\rangle=\langle c_{ij},d_{lm}\rangle=0\\
\langle c_{ij},c_{lm}\rangle&=\langle d_{ij},d_{lm}\rangle=\frac{1}{2}\delta_{il}\delta_{jm}\left(B+C\right),
\end{aligned}
\end{equation}
where
$$
\begin{aligned}
    A:=&\frac{(m+2 n-1) \Gamma (m+n)}{\Gamma (m+2) \Gamma (n)},\\
    B:=&\frac{\Gamma (m+n)}{\Gamma (m+2) \Gamma (n-1)},\\
    C:=&\frac{\Gamma (m+n)}{\Gamma (m+1) \Gamma (n)}.\\
\end{aligned}
$$
\end{proposition}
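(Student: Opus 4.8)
The plan is to reduce every inner product to a trace on $\mathcal{H}_{m,n}$ and then to a handful of combinatorial sums over weak compositions. All three families of basis elements are skew-Hermitian, so for any two of them $\langle u,v\rangle=-\mathrm{tr}(uv)$, and each is built from the mode operators $\hat n_i=\hat a_i^\dagger\hat a_i$ and $\hat a_i^\dagger\hat a_j\pm\hat a_j^\dagger\hat a_i$. Hence every inner product in the statement is a fixed rational combination of traces of products $\hat a_p^\dagger\hat a_q\,\hat a_r^\dagger\hat a_s$ acting on $\mathcal{H}_{m,n}$, which I would evaluate in the Fock basis $\{\ket{n_1\ldots n_m}\}$.

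First I would dispose of the vanishing statements. A monomial $\hat a_p^\dagger\hat a_q\,\hat a_r^\dagger\hat a_s$ has zero trace unless it preserves every mode occupation number, since otherwise it is strictly off-diagonal in the Fock basis. Checking cases: $b_k c_{ij}$ and $b_k d_{ij}$ with $i\neq j$ always transfer a photon between modes $i$ and $j$, so $\langle b_k,c_{ij}\rangle=\langle b_k,d_{ij}\rangle=0$ for every $k$; for $\langle c_{ij},d_{lm}\rangle$, if $\{i,j\}\neq\{l,m\}$ every surviving monomial either touches at least three modes or transfers a photon, while if $\{i,j\}=\{l,m\}$ (so $i=l$, $j=m$ under the ordering $k<j$) the diagonal part of $c_{ij}d_{ij}$ reduces, using $[\hat a_i,\hat a_j]=[\hat a_i,\hat a_j^\dagger]=0$ for $i\neq j$, to a multiple of $\hat n_i-\hat n_j$, which is traceless because $\mathcal{H}_{m,n}$ is invariant under permuting modes. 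The same off-diagonal observation together with this permutation symmetry also give $\langle c_{ij},c_{lm}\rangle=\langle d_{ij},d_{lm}\rangle=0$ whenever $(i,j)\neq(l,m)$.

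For the diagonal values I would normal-order using the commutation relations for distinct modes, for instance $\hat a_j^\dagger\hat a_k\,\hat a_k^\dagger\hat a_j=\hat n_j(\hat n_k+1)$ when $j\neq k$. This turns $\langle b_i,b_i\rangle$, $\langle b_i,b_j\rangle$ ($i\neq j$), $\langle c_{ij},c_{ij}\rangle$ and $\langle d_{ij},d_{ij}\rangle$ into linear combinations of the three sums $S_0=\sum_{\vec n}n_1$, $S_1=\sum_{\vec n}n_1^2$ and $S_2=\sum_{\vec n}n_1 n_2$, taken over all $\vec n=(n_1,\ldots,n_m)$ with $n_i\ge0$ and $\sum_i n_i=n$ (other index choices give the same values by symmetry); concretely $\langle b_i,b_i\rangle=S_1$, $\langle b_i,b_j\rangle=S_2$ for $i\neq j$, and $\langle c_{ij},c_{ij}\rangle=\langle d_{ij},d_{ij}\rangle=\tfrac12(S_2+S_0)$. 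I would then evaluate the sums by generating functions: since $\tfrac{1}{(1-x)^m}=\sum_n\binom{m+n-1}{n}x^n$ enumerates weak compositions, replacing one factor $\tfrac{1}{1-x}$ by $\sum_k k x^k=\tfrac{x}{(1-x)^2}$, or by $\sum_k k^2 x^k=\tfrac{x(1+x)}{(1-x)^3}$, or two factors by $\tfrac{x}{(1-x)^2}$, and extracting the coefficient of $x^n$ gives
\[
S_0=\binom{m+n-1}{n-1},\qquad S_1=\binom{m+n}{n-1}+\binom{m+n-1}{n-2},\qquad S_2=\binom{m+n-1}{n-2}.
\]
Rewriting these with Gamma functions yields $S_0=C$, $S_1=A$ and $S_2=B$, and substituting back produces exactly $\langle b_i,b_i\rangle=A$, $\langle b_i,b_j\rangle=B$, and $\langle c_{ij},c_{ij}\rangle=\langle d_{ij},d_{ij}\rangle=\tfrac12(B+C)$.

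The routine-but-delicate part is the bookkeeping in the middle steps: tracking which of the several monomials in each product is diagonal in the Fock basis and normal-ordering the remaining ones without losing a term, together with avoiding off-by-one slips in the binomial identities. A convenient internal check is the identity $\sum_i\hat n_i=n\,\mathrm{Id}$ on $\mathcal{H}_{m,n}$, which immediately forces $S_0=nM/m$ and $m S_1+m(m-1)S_2=n^2 M$ with $M=\binom{m+n-1}{n}=\dim\mathcal{H}_{m,n}$, pinning down two of the three sums with no coefficient extraction at all.
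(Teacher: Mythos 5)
Your proof is correct and follows essentially the same route as the paper: both reduce every inner product to Fock-basis traces of $\hat n_i\hat n_j$, $\hat n_i$ and the hopping bilinears, and then to the three combinatorial sums you call $S_0,S_1,S_2$ (the paper's $C,A,B$). The only divergence is that you evaluate all three sums explicitly by generating functions (using the identity $\sum_i\hat n_i=n\,\mathrm{Id}$ only as a check), whereas the paper computes $A$ and $C$ by computer algebra and extracts $B$ from that same identity; your version is more self-contained, and your vanishing arguments (off-diagonality in the Fock basis plus mode-permutation symmetry for the $\hat n_j-\hat n_i$ term in $\mathrm{tr}(c_{ij}d_{ij})$) supply justifications the paper merely asserts.
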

\begin{proof}
Observe that the given basis is not orthonormal for the case of $n>1$. Indeed, 
\begin{equation}\label{eq:product}
\begin{aligned}
\langle b_i,b_i\rangle=&\mathrm{tr}\left(\hat{n}_i\hat{n}_j\right)=\sum_{n_1+n_2+\cdots+n_m=n} \langle n_1,\cdots,n_m\vert \hat{n}_i\hat{n}_j\vert n_1,\cdots,n_m\rangle,    \\
\langle b_i,c_{jk}\rangle=&\mathrm{tr}\left(\hat{n}_ic_{jk}\right) =   \sum_{n_1+n_2+\cdots+n_m=n} \langle n_1,\cdots,n_m\vert \hat{n}_i\hat{c}_{jk}\vert n_1,\cdots,n_m\rangle=0,\\
\langle b_i,d_{jk}\rangle=&\mathrm{tr}\left(\hat{n}_id_{jk}\right) =   \sum_{n_1+n_2+\cdots+n_m=n} \langle n_1,\cdots,n_m\vert \hat{n}_i\hat{d}_{jk}\vert n_1,\cdots,n_m\rangle=0,\\
\langle c_{ij},c_{lm}\rangle=&\frac{1}{4}\mathrm{tr}\left((\hat{a}^\dag_i \hat{a}_j+\hat{a}^\dag_j \hat{a}_i\big )(\hat{a}^\dag_l \hat{a}_m+\hat{a}^\dag_m \hat{a}_l\big )\right)=\frac{\delta_{im}\delta_{lj}+\delta_{il}\delta_{jm}+\delta_{jm}\delta_{il}+\delta_{jl}\delta_{im}}{4}\mathrm{tr}\left(\hat{n}_m+\hat{n}_m\hat{n}_l\right)\\
=&\frac{1}{2}\delta_{il}\delta_{jm}\mathrm{tr}\left(\hat{n}_m+\hat{n}_m\hat{n}_l\right)=\frac{1}{2}\delta_{il}\delta_{jm}\sum_{n_1+n_2+\cdots+n_m=n} \langle n_1,\cdots,n_m\vert \hat{n}_m+\hat{n}_m\hat{n}_l\vert n_1,\cdots,n_m\rangle,\\
\langle c_{ij},d_{lm}\rangle=&\frac{-i}{4}\mathrm{tr}\left((\hat{a}^\dag_i \hat{a}_j+\hat{a}^\dag_j \hat{a}_i\big )(\hat{a}^\dag_l \hat{a}_m-\hat{a}^\dag_m \hat{a}_l\big )\right)=0,\\
\langle d_{ij},d_{lm}\rangle=&\frac{1}{2}\delta_{il}\delta_{jm}\mathrm{tr}\left(\hat{n}_m+\hat{n}_m\hat{n}_l\right)=\frac{1}{2}\delta_{il}\delta_{jm}\sum_{n_1+n_2+\cdots+n_m=n} \langle n_1,\cdots,n_m\vert \hat{n}_m+\hat{n}_m\hat{n}_l\vert n_1,\cdots,n_m\rangle.
\end{aligned}
\end{equation}
Because of the independence in the particular mode, and assuming that there are at least two modes, we have
$$
\begin{aligned}
&\sum_{n_1+n_2+\cdots+n_m=n} \langle n_1,\cdots,n_m\vert \hat{n}_i\hat{n}_j\vert n_1,\cdots,n_m\rangle=\left\lbrace\begin{array}{ccc}
  A & \mathrm{if}& i=j \\
  B  & \mathrm{if}& i\neq j
\end{array}\right.\\
&\sum_{n_1+n_2+\cdots+n_m=n} \langle n_1,\cdots,n_m\vert \hat{n}_m+\hat{n}_m\hat{n}_l\vert n_1,\cdots,n_m\rangle=C+B
\end{aligned}
$$
with 
\begin{equation}\label{eq:36}
\begin{aligned}
A:=&\displaystyle \sum_{n_1+n_2+\cdots+n_m=n} \langle n_1,\cdots,n_m\vert \hat{n}_1^2\vert n_1,\cdots,n_m\rangle,\\
B:=& \displaystyle\sum_{n_1+n_2+\cdots+n_m=n}\langle n_1,\cdots,n_m\vert \hat{n}_1\hat{n}_2\vert n_1,\cdots,n_m\rangle,\\
C:=&\displaystyle \sum_{n_1+n_2+\cdots+n_m=n} \langle n_1,\cdots,n_m\vert \hat{n}_1\vert n_1,\cdots,n_m\rangle.    
\end{aligned}    
\end{equation}
Therefore, 
\begin{equation}\label{productebase}
\begin{aligned}
    \langle b_i,b_j\rangle =& \left\lbrace\begin{array}{ccc}
  A & \mathrm{if}& i=j \\
  B  & \mathrm{if}& i\neq j
\end{array}\right.\\
\langle b_k,c_{ij}\rangle=&\langle b_k,d_{lm}\rangle=\langle c_{ij},d_{lm}\rangle=0\\
\langle c_{ij},c_{lm}\rangle&=\langle d_{ij},d_{lm}\rangle=\frac{1}{2}\delta_{il}\delta_{jm}\left(B+C\right).
\end{aligned}
\end{equation}
To finish the proof of the proposition we only have to calculate the values of $A,B,C$ given in \eqref{eq:36}. An easy computation using \textsf{Mathematica} \cite{Mathematica} shows that
$$
\begin{aligned}
    A=&\sum_{k=0}^n(n-k)^2\binom{m+k-2}{k}=\frac{(m+2 n-1) \Gamma (m+n)}{\Gamma (m+2) \Gamma (n)},\\
    C=&\sum_{k=0}^n(n-k)\binom{m+k-2}{k}=\frac{\Gamma (m+n)}{\Gamma (m+1) \Gamma (n)}.
\end{aligned}
$$
Moreover,
$$
\begin{aligned}
    Mn^2=&\sum_{n_1+n_2+\cdots+n_m=n} \langle n_1,\cdots,n_m\vert \sum_{i=1}^m\hat{n}_i\sum_{j=1}^m\hat{n}_j\vert n_1,\cdots,n_m\rangle\\
    =&\sum_{i=1}^m\sum_{n_1+n_2+\cdots+n_m=n}\langle n_1,\cdots,n_m\vert \hat{n}_i^2\vert n_1,\cdots,n_m\rangle\\
    &+\sum_{i\neq j}^m\sum_{n_1+n_2+\cdots+n_m=n}\langle n_1,\cdots,n_m\vert \hat{n}_i\hat{n}_j\vert n_1,\cdots,n_m\rangle\\
    =&mA+m(m-1)B
\end{aligned}
$$
and finally,
$$
\begin{aligned}
    B=&\frac{1}{m(m-1)}\left(Mn^2-mA\right)=\frac{\Gamma (m+n)}{\Gamma (m+2) \Gamma (n-1)}.
\end{aligned}
$$
\end{proof}

\medskip

Once the above proposition is proved we are ready to show Theorem \ref{teo:tria}. Given a pure state $\ket{\Psi}$ consider the associated density matrix 
$
\rho_\Psi=\ket{\Psi}\bra{\Psi}
$, then $i\rho_\Psi$ belongs to $\mathfrak{u}(M)$ and has a tangent component $v_t\in\dd$.  In our basis therefore,
$$
v_t=v_t^i b_i+v_t^{lm}c_{lm}+v_t^{\alpha \beta}d_{\alpha \beta},
$$
where we use the Einstein summation convention. Hence
\begin{equation}\label{eq:19}
\begin{aligned}
I_t(i\rho_\Psi)=&\langle v_t, v_t\rangle=\langle i\rho_\Psi, v_t\rangle=\langle i\rho_\Psi, v_t^i b_i+v_t^{lm}c_{lm}+v_t^{\alpha\beta}d_{\alpha\beta}\rangle\\
=&v_t^i\langle i\rho_\Psi,  b_i\rangle+v_t^{lm}\langle i\rho_\Psi, c_{lm}\rangle+v_t^{\alpha \beta}\langle i\rho_\Psi,d_{\alpha\beta}\rangle\\
=&v_t^i\langle ib_i\rangle_\Psi+v_t^{lm}\langle ic_{lm}\rangle_\Psi+v_t^{\alpha\beta}\langle id_{\alpha\beta}\rangle_\Psi,
\end{aligned}    
\end{equation}
where 
$$
\langle ib_i\rangle_\Psi=\bra{\Psi} ib_i\ket{\Psi},\quad \langle ic_{lm}\rangle_\Psi=\bra{\Psi} ic_{lm}\ket{\Psi}, \quad \mathrm{and}\quad \langle id_{ro}\rangle_\Psi=\bra{\Psi} id_{ro}\ket{\Psi}.
$$
On the other hand, using \eqref{productebase} we have
$$
\begin{aligned}
\langle ib_i\rangle_\Psi=\langle b_i, i\rho_\Psi\rangle=\langle b_i, v_t\rangle=v^j_t\langle b_i,b_j\rangle=g_{ij} v^j_t
\end{aligned}
$$
where $g$ is the matrix with elements $g_{ij}=\langle b_i,b_j\rangle$.
Then, 
$$
v^j_t=g^{ji}\langle ib_i\rangle_\Psi,
$$
with $g^{ij}=\left(g^{-1}\right)_{ij}$. After an algebraic manipulation we obtain
\begin{equation}
    \begin{aligned}
        \left(g^{-1}\right)_{ij}=\left\lbrace\begin{array}{ccc}
  \frac{m-2+\alpha}{B(\alpha-1)(m-1+\alpha)} & \mathrm{if}& i=j \\
  \frac{-1}{B(\alpha-1)(m-1+\alpha)}  & \mathrm{if}& i\neq j
\end{array}\right.
    \end{aligned}
\end{equation}
with $\alpha=A/B$.
Analogously,
$$
\begin{aligned}
\langle ic_{ij}\rangle_\Psi=\langle c_{ij}, i\rho_\Psi\rangle=\langle c_{ij}, v_t\rangle=v^{lm}_t\langle c_{ij},c_{lm}\rangle=\frac{1}{2}(C+B)v^{ij}_t.
\end{aligned}
$$
Therefore
\begin{equation}
    v^{ij}_t=\frac{2}{B+C}\langle ic_{ij}\rangle_\Psi,
\end{equation}
and similarly,
\begin{equation}
  v^{\alpha \beta}_t=\frac{2}{B+C}\langle id_{\alpha \beta}\rangle_\Psi.   
\end{equation}
Then \eqref{eq:19} can be rewritten as 
\begin{equation}
\begin{aligned}
    I_t(i\rho_\Psi)=&\frac{m-2+\alpha}{B(\alpha-1)(m-1+\alpha)}\sum_{i=1}^m\langle \hat{n}_i\rangle_\Psi^2-\frac{1}{B(\alpha-1)(m-1+\alpha)}\sum_{i\neq j} \langle \hat{n}_i\rangle_\Psi\langle \hat{n}_j\rangle_\Psi\\
    &+\sum_{i<j}\frac{2}{B+C}\langle ic_{ij}\rangle_\Psi^2+\sum_{i<j}\frac{2}{B+C}\langle id_{ij}\rangle_\Psi^2
\end{aligned}
\end{equation}
Using elementary manipulations we have
\begin{equation}
\begin{aligned}
    I_t(i\rho_\Psi)=&\frac{(m-2)B+A}{(A-B)((m-1)B+A)}n^2-\frac{2}{A-B}\sum_{i< j} \langle \hat{n}_i\rangle_\Psi\langle \hat{n}_j\rangle_\Psi\\
    &+\frac{2}{B+C}\sum_{i<j}\langle\hat{a}^\dag_i\hat{a}_j\rangle_\Psi\langle\hat{a}^\dag_j\hat{a}_i\rangle_\Psi,
\end{aligned}
\end{equation}
but then
$$
\begin{aligned}
    C_1(m,n):=&\frac{(m-2)B+A}{(A-B)((m-1)B+A)}n^2=\frac{(m n+1) \Gamma (m+1) \Gamma (n+1)}{\Gamma (m+n+1)},\\
    C_2(m,n):=&\frac{2}{A-B}=\frac{2 \Gamma (m+2) \Gamma (n)}{\Gamma (m+n+1)}=\frac{2}{B+C},\\
\end{aligned}
$$
and the theorem is proved.

\bibliography{AllowedAndForbidden}

\end{document}